\DeclareMathOperator*{\argmax}{arg\,max}
\newtheorem{corollary}{Corollary}
\newtheorem{proposition}{Proposition}
\newtheorem{remark}{Remark}
\begin{document}
	
\title{\linespread{1} Novel Selection Schemes for Multi-RIS-Assisted Fluid Antenna Systems}
\author{Mahmoud Aldababsa,~\IEEEmembership{Senior Member,~IEEE,}
Taissir Y. Elganimi,~\IEEEmembership{Senior Member,~IEEE,}
Mahmoud~A.~Albreem,~\IEEEmembership{Senior Member,~IEEE,}
and Saeed~Abdallah,~\IEEEmembership{Member,~IEEE} 
	 
}
\maketitle
\begin{abstract}
This paper investigates the performance of a multi-reconfigurable intelligent surface (RIS)-assisted fluid antenna system (FAS). In this system, a single-antenna transmitter communicates with a receiver equipped with a planar FAS through multiple RISs in the absence of a direct link. To enhance the system performance, we propose two novel selection schemes: \textit{Max-Max} and \textit{Max-Sum}. In particular, the \textit{Max-Max} scheme selects the best combination of a single RIS and a single fluid antenna (FA) port that offers the maximum signal-to-noise ratio (SNR) at the receiver. On the other hand, the \textit{Max-Sum} scheme selects one RIS while activating all FA ports providing the highest overall SNR. We conduct a detailed performance analysis of the proposed system under Nakagami-$m$ fading channels. First, we derive the cumulative distribution function (CDF) of the SNR for both selection schemes. The derived CDF is then used to obtain approximate theoretical expressions for the outage probability (OP) and the delay outage rate (DOR). Next, a high-SNR asymptotic analysis is carried out to provide further insights into the system performance in terms of diversity and coding gains. Finally, the analytical results are validated through extensive Monte Carlo simulations, demonstrating their accuracy and providing a comprehensive understanding of the system's performance.

\end{abstract}
	
\IEEEpeerreviewmaketitle

\begin{IEEEkeywords}
Multi-reconfigurable intelligent surface, fluid antenna
system, novel selection schemes, outage probability, delay outage rate.
\end{IEEEkeywords}

	
\section{Introduction}
\IEEEPARstart{I}{n} the sixth-generation (6G) of wireless networks, the peak data rate is anticipated to reach a thousand Gbps, with an envisioned connectivity density of up to ten million users per square kilometer. Moreover, energy and spectrum efficiencies are expected to achieve ten-hundred times the current levels \cite{6G}. To date, fifth-generation (5G) technologies have incorporated massive multiple-input multiple-output (MIMO), widely considered as the most important breakthrough in the recent history of mobile technology \cite{MIMO}. The massive MIMO technology utilizes a large number of antennas to allow simultaneous transmission to multiple users within the same time-frequency resource block. 
Although it can greatly improve the spectral efficiency, it also incurs higher costs due to increased hardware requirements and computational complexity \cite{8804165}.
Due to these limitations, 5G systems may not be able to accommodate the rapidly growing demands of future wireless networks. This necessitates further evolution and development of mobile communication technologies such as reconfigurable intelligent surfaces (RISs) \cite{RIS} and fluid antenna systems (FASs) \cite{FAS}.
	
RISs have been proposed as a novel and revolutionary technology that is both affordable and energy efficient. A RIS consists of a large number of reflecting elements embedded in a planar surface, which can be controlled via software loaded to an attached smart controller. Each element in the RIS can tune the reflected signal's phase. Therefore, a considerable improvement can be achieved in the quality of the wireless channel between the transmitter and the receiver\cite{RIS_idea}. However, FASs have also been proposed as a promising solution to enhance spatial diversity and reduce hardware costs. Briefly, a fluid antenna (FA), also known as a movable antenna, is a software-controlled fluidic or dielectric structure capable of dynamically altering its position, shape, size, or radiation properties \cite{FAS_idea}. This technology is now feasible, thanks to recent advancements in antenna design via liquid metals and radio frequency (RF)-switchable pixels, which introduce new degrees of freedom that can be exploited to enhance the overall system performance. 
	
Compared to the traditional massive MIMO systems with fixed-position antennas, the RIS can be implemented at a much lower hardware cost and power consumption, since the cost per each reflecting element of the RIS is much lower than the cost per each antenna element in MIMO systems. The FAS, on the other hand, can fully leverage spatial diversity by dynamically adjusting the FA position within a defined finite region at the transceiver. Considering the impact of RIS and FAS, these two innovative technologies offer significantly more efficient alternatives to traditional massive MIMO systems \cite{RIS_FA_1}. Although RIS and FAS are inherently complementary technologies, their mutual synergy has not been adequately explored. To date, the existing literature on the combined implementation of RIS and FAS is notably sparse. In the following subsection, the most relevant work related to FAS and RIS-assisted FAS is highlighted.

\subsection{Related Works}
\subsubsection{FAS}
In recent years, numerous studies have investigated different aspects of FAS in wireless communication systems. The authors in \cite{9715064} addressed the challenge of port selection in the FAS by proposing fast algorithms that combine machine learning and analytical approximations to optimize the signal-to-noise ratio (SNR). The work in \cite{10103838} advanced FAS modeling by using a two-stage approximation, which simplifies the outage probability (OP) expression to a single integral. This approach improves channel distribution accuracy while accounting for spatial correlation under Jake’s model. In \cite{10253941}, the authors investigated the FAS under arbitrary fading with correlated channels, using copula theory to model dependencies and derive a closed-form OP expression. They further analyzed the impact of Nakagami-\emph{m} fading with Archimedean copulas to assess spatial correlation effects. The authors in \cite{10678877} modeled spatial correlation in the FAS using elliptical copulas, representing Jake’s model through a Gaussian copula. They also derived key performance metrics, including the OP and the delay outage rate (DOR), for arbitrary and Nakagami-\emph{m} correlated fading. The work in \cite{10319727} studied the OP of the FAS-equipped receiver in a point-to-point system, using Student-\emph{t} and Gaussian copula to model the spatial correlation. In \cite{10623405}, the authors proposed a block-correlation model to approximate the spatial correlation in the FAS, enabling tractable performance analysis while closely matching realistic models like the Jake’s and Clarke’s models. In \cite{10375698}, the authors examined the FAS with multiple active ports, selecting a specified subset of ports for maximum ratio combining. 

\subsubsection{RIS-assited FAS}
A small number of studies have investigated the integration of RISs and FASs. In \cite{RIS_FA_1}, the authors introduced a novel paradigm that employs RISs as distributed artificial scattering surfaces to create a rich scattering environment and enable the FAS to mitigate multiuser interference. The authors in \cite{RIS_FA_2} presented a low-complexity joint beamforming strategy tailored for the RIS-FAS, which operates solely on statistical channel state information (CSI). In \cite{RIS_FA_3}, a novel FA-empowered joint transmit and receive index modulation (IM) transmission scheme was designed for RIS-assisted millimeter-wave (mmWave) single-input multiple-output communication systems. In \cite{RIS_FA_4}, the authors derived approximate expressions for the system's OP via the central limit theorem (CLT), and used the block correlation channel model to simplify the derived OP expressions and reduce the computational complexity. In \cite{RIS_FA_5}, the authors analyzed the spatial correlation structures between the ports of the planar FAS. The joint distribution of the equivalent channel gain at the user was derived using the CLT to obtain the OP and DOR. In a recent work \cite{10826703}, the authors studied the performance of multi-RIS-aided FAS, assuming that FAS includes one RF chain, i.e., only one port is activated for communication. More recently, the OP is derived in \cite{10858773} based on the block-diagonal matrix approximation (BDMA) model to analyze the FAS-RIS system.

\subsection{Motivations and Contributions}
The aforementioned studies focus primarily on single-RIS configurations \cite{RIS_FA_1, RIS_FA_2, RIS_FA_3, RIS_FA_4, RIS_FA_5, 10858773} or multi-RISs with the activation of a single port in FAS \cite{10826703}. Although these approaches simplify system design and reduce computational complexity, they do not fully leverage the potential benefits of spatial diversity and performance enhancement that can be achieved by deploying multiple RISs or the simultaneous activation of all FAS ports. This motivates us to investigate the integration of multiple RISs and FAS. The main contributions of this paper can be summarized as follows:
\begin{itemize}
	\item We consider a multi-RIS-assisted FAS, where a single-antenna transmitter communicates with a receiver equipped with a planar FAS, facilitated by multiple RISs in the absence of a direct link between the transmitter and the receiver.
	\item We introduce two new selection schemes, denoted as \textit{Max-Max} and \textit{Max-Sum}. The \textit{Max-Max} scheme focuses on selecting the optimal combination of a single RIS and port to maximize the SNR in the receiver. Specifically, one RIS is chosen from a set of available RISs, and one port is selected from multiple available ports to achieve the highest received SNR at the receiver. On the other hand, the \textit{Max-Sum} scheme involves selecting one RIS from the available options while activating all ports to combine the signals of the chosen RIS at the receiver, thereby maximizing the overall received SNR through signal aggregation.
	\item We perform a performance analysis of multi-RIS-assisted FAS for both the \textit{Max-Max} and \textit{Max-Sum} selection schemes over Nakagami-$m$ fading channels. Initially, we derive the cumulative distribution function (CDF) of the SNR for each selection scheme. Then, approximate closed-form expressions for the OP and DOR are obtained. Next, we derive asymptotic OP expressions in the high-SNR regions. This explicitly provides deeper insights into system performance by presenting the overall performance in terms of diversity and coding gains.
	\item We validate the derived analytical results through Monte Carlo simulations, ensuring their accuracy by comparing the theoretical findings with simulation-based outcomes. 
    The results show a considerable advantage in the \textit{Max-Sum} selection scheme over the \textit{Max-Max} selection scheme. Furthermore, the DOR performance degrades as the amount of transmitted data increases, while it improves as the channel bandwidth increases. 
	\end{itemize}

\subsection{Paper Organization}
The remainder of this paper is divided into five sections as follows. Section \ref{System Model} presents the system model. Section \ref{Selection Scheme} introduces the algorithms, formal definitions, and corresponding mathematical formulations of the selection schemes. The performance analysis of the proposed system is presented in Section \ref{Performance Analysis}, while Section \ref{Numerical Analysis} offers detailed numerical evaluations and insights. Finally, Section \ref{Conclusion} summarizes the key findings and concludes the paper.




\begin{figure}[t]
\centering
\subfloat[\label{AAA}]{\includegraphics[width=7.7cm,height=4.3cm]{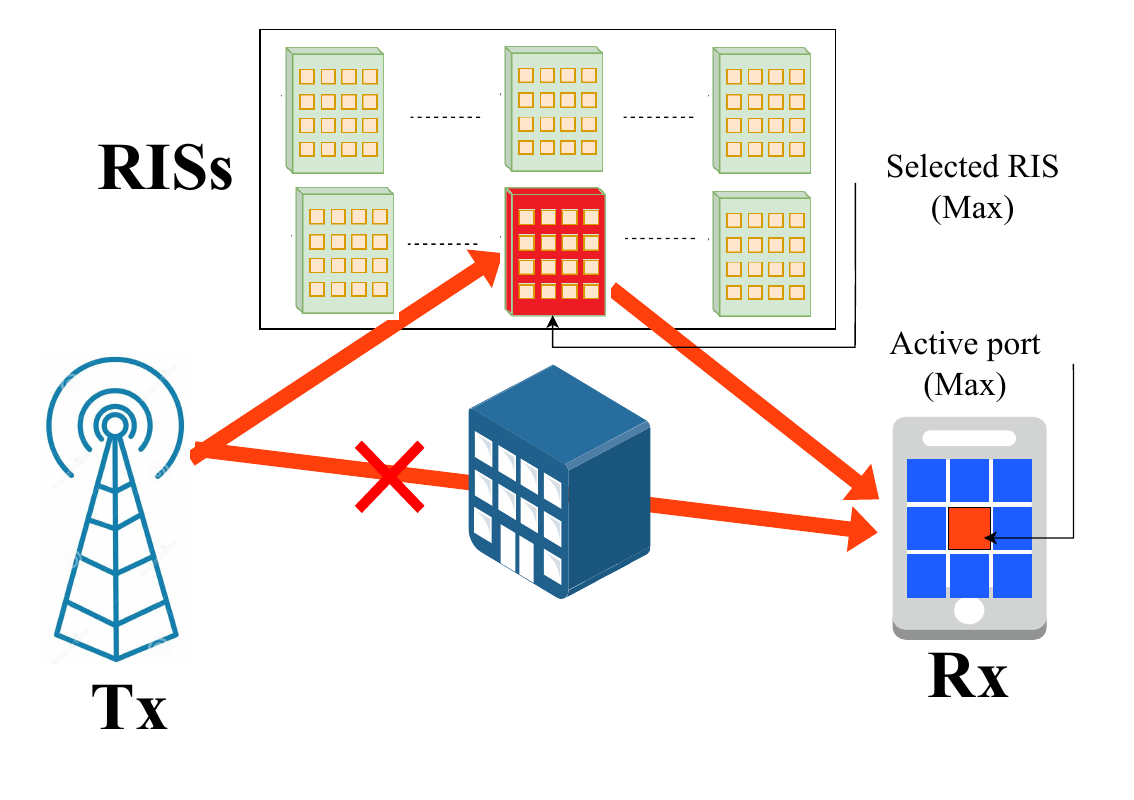}}
\hspace{\fill}
\subfloat[\label{BBB}]
{\includegraphics[width=7.7cm,height=4.3cm]{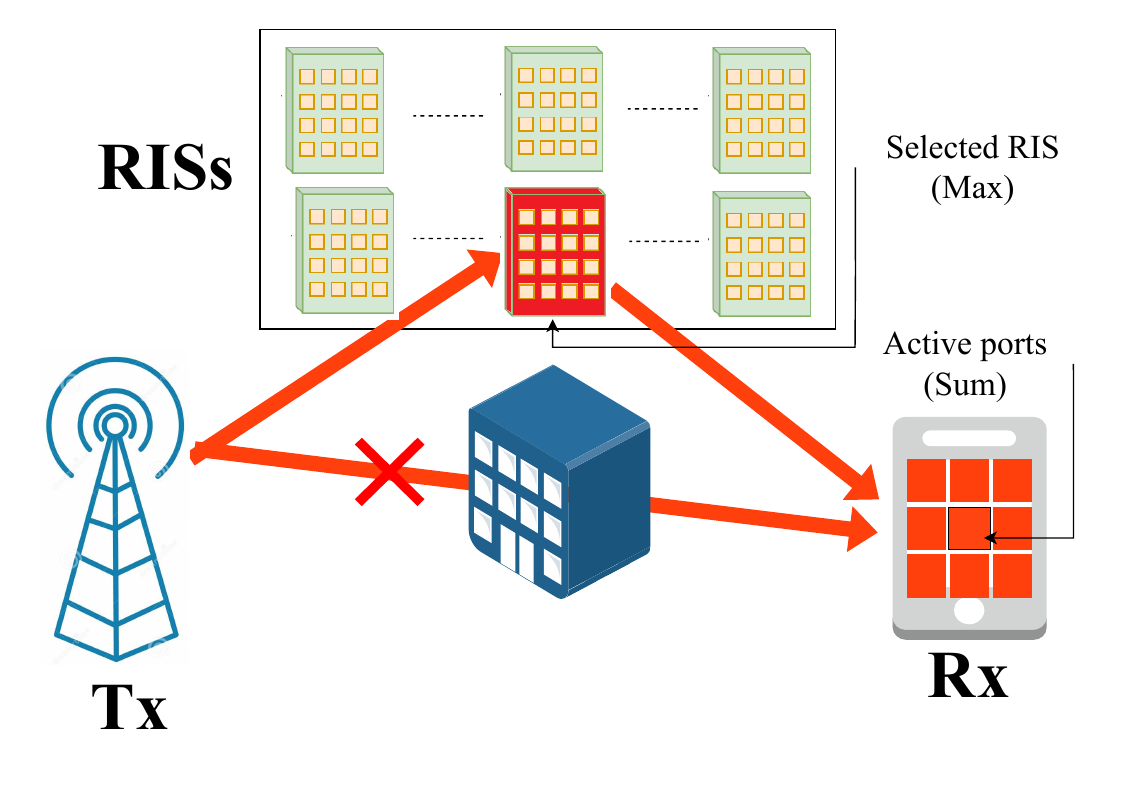}}
\caption{System model of multi-RIS-assisted FAS (a) \textit{Max-Max} selection scheme (b) \textit{Max-Sum} selection scheme.}
\label{nniyiyiy}
\end{figure}


\section{System Model}\label{System Model}

As illustrated in Fig. 1, the wireless communication system under consideration comprises a multi-RIS-assisted FAS, wherein a single-antenna transmitter $(\text{T}_{x})$ communicates with a planar FAS-equipped receiver $(\text{R}_{x})$, through the assistance of $N$ RISs. It is assumed that both the $\text{T}_{x}$ and the $\text{R}_{x}$ are located in the far-field region of the RISs. Furthermore, direct transmission links between the $\text{T}_{x}$ and the $\text{R}_{x}$ are obstructed due to the presence of natural or artificial barriers. Therefore, the communication between the $\text{T}_{x}$ and the $\text{R}_{x}$ occurs exclusively through RISs. 

In this paper, the $n$-th RIS is denoted as $\text{RIS}_{n}$, where $n\in\left\{1, \dots, N\right\}$, and consists of $M_{n}$ passive reflecting elements. The $i$-th element of the $\text{RIS}_{n}$ is denoted by $\text{RIS}_{n,i}$, where $i\in\left\{1, \dots, M_{n}\right\}$.
The $\text{R}_{x}$ is equipped with a two-dimensional (2D) surface, with $K$ pre-defined locations, referred to as ports, distributed across an area $A$. 
In addition, the $k$-th port of $\text{R}_{x}$ is denoted as $\text{R}_{x,k}$, where $k\in\left\{1, \dots, K\right\}$. 
In the proposed system model, a grid-based structure is adopted where $K_{i}$ port positions are uniformly distributed along a linear dimension of length $\lambda d_{i}$, with $i \in \{1, 2\}$, and $\lambda$ represents the wavelength of the carrier with frequency $f_c$. Hence, the total number of ports is $K = K_{1} \times K_{2}$, while the total area is $A = \lambda d_{1} \times \lambda d_{2}$. A suitable mapping function converts the 2D indices to the one-dimensional (1D) index, defined such that $f(u) = (u_{1}, u_{2})$, with its inverse $f^{-1}(u_{1}, u_{2}) = u$, where $u \in \{1, \ldots, K\}$, $u_{1} \in \{1, \ldots, K_{1}\}$, and $u_{2} \in \{1, \ldots, K_{2}\}$. 


The channel in the first hop (between the $\text{T}_{x}$ and $\text{RIS}_{n}$) undergoes Nakagami-$m$ fading. The channel vector between the $\text{T}_{x}$ and $\text{RIS}_{n}$ is denoted by $\mathbf{h}_{n}\in \mathbb{C}^{M_{n}\times1}$ and written as $\mathbf{h}_{n} = \left[h^{(1)}_{n}, \dots, h^{(i)}_{n}, \dots, h^{(M_{n})}_{n}\right]^{T}$, where $h^{(i)}_{n}=\frac{1}{L_{n,1}}\alpha^{(i)}_{n}e^{-j\phi^{(i)}_{n}}$ denotes the channel coefficient between the $\text{T}_{x}$ and RIS$_{n,i}$, where $L_{n,1}$ represents the distance from the $\text{T}_{x}$ to $\text{RIS}_{n}$, while $\alpha^{(i)}_{n}$ and $\phi^{(i)}_{n}$ refer to the channel amplitude and phase, respectively.
The channel in the second hop (between $\text{RIS}_{n}$ and the $\text{R}_{x}$) also undergoes Nakagami-$m$ fading. The channel vector between $\text{RIS}_{n}$ and $\text{R}_{x,k}$ is denoted by $\mathbf{g}_{n,k}\in \mathbb{C}^{M_{n}\times1}$ and written as $\mathbf{g}_{n,k} = \left[g^{(1)}_{n,k}, \dots, g^{(i)}_{n,k}, \dots, g^{(M_{n})}_{n,k}\right]^{T}$, where $g^{(i)}_{n,k} = \frac{1}{L_{n,2}}\beta^{(i)}_{n,k}e^{-j\Phi^{(i)}_{n}}$ denotes the channel coefficient between $\text{RIS}_{n,i}$ and $\text{R}_{x,k}$, where $L_{n,2}$ represents the distance from $\text{RIS}_{n}$ to $\text{R}_{x}$, while $\beta^{(i)}_{k,n}$ and $\Phi^{(i)}_{n}$ refer to the channel amplitude and phase, respectively. 
The reflection coefficients of the $\text{RIS}_{n}$ are represented as entries of the diagonal matrix $\mathbf{\Theta}_{n}\in \mathbb{C}^{M_{n}\times M_{n}}$. For the $i$-th element, under the full reflection assumption, we have $\mathbf{\Theta}^{(i,i)}_{n}=e^{j\theta^{(i)}_{n}}$, where $\theta^{(i)}_{n}\in[0,2\pi)$. 
Therefore, the received signal at $\text{R}_{x,k}$ from the reflected signals of $\text{RIS}_{n}$, denoted by $y_{k,n}$, can be written as
\begin{align} 
		\label{receivedsignal}
		\text{y}_{k,n} &= \mathbf{h}^{T}_{n}\mathbf{\Theta}_{n}\mathbf{g}_{n,k}{x}+{z}_{n} \nonumber\\
		& = \underbrace{\frac{1}{L_{n,1}L_{n,2}}\sum_{i = 1}^{M_{n}}\alpha^{(i)}_{n}\beta^{(i)}_{n,k}e^{j\left(\theta^{(i)}_{n}-\phi^{(i)}_{n}-\Phi^{(i)}_{n}\right)}{x}}_{\text{signal}} +\underbrace{{z}_{n}}_{\text{noise}}, 
\end{align}
where ${x}$ is the transmitted signal with average power $P_{x}$, i.e., $\mathbb{E}\left[|x|^2\right] = P_{x}$, where $\mathbb{E}\left[\cdot\right]$ and $|\cdot|$ refer to the expected value and magnitude, respectively, and ${z}_{n}$ denotes the additive white Gaussian noise (AWGN) sample with zero mean and variance $\sigma^{2}$.
	
The SNR is defined as the ratio of the power of the signal to the power of the noise, i.e., $\text{SNR} = \frac{P_{\text{signal}}}{P_{\text{noise}}}$. Thus, from \eqref{receivedsignal}, the received SNR corresponding to the $k$-th port and the $\text{RIS}_{n}$, denoted by $\gamma_{k, n}$, can be expressed  as 
\begin{align} \label{eq:gamma1}
		\gamma_{k, n} &= \frac{\mathbb{E}\left[\bigg|\frac{1}{L_{n,1}L_{n,2}}\sum_{i = 1}^{M_{n}}\alpha^{(i)}_{n}\beta^{(i)}_{n,k}e^{j\left(\theta^{(i)}_{n}-\phi^{(i)}_{n}-\Phi^{(i)}_{n}\right)}{x}\bigg|^2\right]}{\mathbb{E}\left[|{z}_{n}|^2\right]}\nonumber\\
		& = \frac{1}{L^{2}_{n,1}L^{2}_{n,2}}\frac{\mathbb{E}\left[|{x}|^2\right]}{\mathbb{E}\left[|{z}_{n}|^2\right]}\bigg|\sum_{i = 1}^{M_{n}}\alpha^{(i)}_{n}\beta^{(i)}_{n,k}e^{j\left(\theta^{(i)}_{n}-\phi^{(i)}_{n}-\Phi^{(i)}_{n}\right)}\bigg|^2.
\end{align}
Define $\bar{\gamma}$ as the average SNR, i.e., $\bar{\gamma} = \frac{\mathbb{E}\left[|{x}|^2\right]}{\mathbb{E}\left[|{z}_{n}|^2\right]} = \frac{P_{x}}{\sigma^{2}}$. Then, $\gamma_{k, n}$ in \eqref{eq:gamma1} can be written as
	\begin{align} \label{eq:gamma2}
		\gamma_{k, n} &= \frac{\bar{\gamma}}{L^{2}_{n,1}L^{2}_{n,2}}\bigg|\sum_{i = 1}^{M_{n}}\alpha^{(i)}_{n}\beta^{(i)}_{n,k}e^{j\left(\theta^{(i)}_{n}-\phi^{(i)}_{n}-\Phi^{(i)}_{n}\right)}\bigg|^2\nonumber\\
		& \overset{(a)}{=}\frac{\bar{\gamma}}{L^{2}_{n,1}L^{2}_{n,2}}\left ({\sum _{i=1}^{M_{n}} \alpha^{(i)}_{n}\beta^{(i)}_{n,k}}\right)^{2}.
	\end{align}
	Here $(a)$ is obtained from the assumption of perfect CSI for the RIS configuration. Thus, the $\text{RIS}_{n}$ aligns the phases of the reflected signals to the sum of the phases of its incoming and outgoing fading channels, i.e., $\theta^{(i)}_{n} = \phi^{(i)}_{n} + \Phi^{(i)}_{n}$.
\section{Selection Schemes}\label{Selection Scheme}
In this section, two selection schemes are introduced, referred to as \textit{Max-Max} and \textit{Max-Sum}. The algorithms of these schemes are presented in \textbf{Algorithms \ref{Max-Max selection scheme}} and \textbf{\ref{Max-Sum selection scheme}}, respectively. The formal definitions and corresponding mathematical formulations are as follows:
\subsection{\textit{Max-Max} Scheme} In this proposed selection scheme, the objective is to select the best RIS and port combination that maximizes the SNR at the $\text{R}_x$. Specifically, one RIS from a set of $N$ available RISs is chosen, and one port from a set of $K$ available ports are selected to achieve the maximum received SNR at the $\text{R}_x$. 
To accomplish this, the process involves two steps. In the first step, for each port, the best RIS that offers the highest SNR is selected, i.e.,  
		\begin{equation}\label{max-max_eq1}
			n^{*} = \underset{n = 1, \dots, N}{\arg\max} \left\{\gamma_{k,n}\right\},
		\end{equation}
		where $n^*$ denotes the index of the best RIS. Consequently, the SNR corresponding to the $k$-th port and the $\text{RIS}_{n^*}$ is expressed as
		\begin{equation}\label{max-max_eq2}
			\gamma_{k,n^{*}} = \underset{n = 1, \dots, N}{\max} \left\{\gamma_{k,n}\right\}. 
		\end{equation}

\begin{algorithm} [t]
	\caption{Proposed \textit{Max-Max} selection scheme.}\label{Max-Max selection scheme}
	\begin{algorithmic}[1]
		\State \textbf{Input:} A set of $N$ RISs, a set of $K$ ports, and the initial SNR matrix, $\gamma_{k,n}$, for $k = 1, \dots, K$ and $n = 1, \dots, N$.
		\State \textbf{Output:} The index of the best port $k^*$, the index of the best RIS $n^*$, and the corresponding maximum achievable SNR, $\gamma_{k^*,n^*}$.
		\For{$k = 1$ to $K$}
		\For{$n = 1$ to $N$}
		\State Evaluate $\gamma_{k,n}$ as per the relation defined in \eqref{eq:gamma2}.
		\EndFor
		\EndFor
		
		\State Identify the RIS index $n^*$ that maximizes the SNR for each $k$-th port, i.e.,
		\[
		n^* \gets \underset{n = 1, \dots, N}{\argmax} \left(\gamma_{k,n}\right) \quad \text{(from \eqref{max-max_eq1})}.
		\]
		
		\State Calculate the maximum SNR for the selected $n^*$-th RIS at the $k$-th port, i.e.,
		\[
		\gamma_{k,n^*} \gets \underset{n = 1, \dots, N}{\max} \left\{\gamma_{k,n}\right\} \quad \text{(from \eqref{max-max_eq2})}.
		\]
		
		\State Determine the port $k^*$ that provides the maximum SNR for the selected $n^*$-th RIS, i.e.,
		\[
		k^* \gets \underset{k = 1, \dots, K}{\argmax} \left\{\gamma_{k,n^{*}}\right\} \quad \text{(from \eqref{max-max_eq1+})}.
		\]
		
		\State Finally, compute the maximum achievable SNR at the receiver by selecting the best port and RIS combination, i.e.,
		\[
		\gamma_{k^*,n^*} \gets \underset{k = 1, \dots, K}{\max} \left\{\gamma_{k,n^*}\right\} \quad \text{(from \eqref{max-max_eq2+})}.
		\]
		
		\State \textbf{Return:} The best port index $k^*$, the best RIS index $n^*$, and the corresponding maximum achievable SNR, $\gamma_{k^*,n^*}$.
	\end{algorithmic}
\end{algorithm}

		In the second step, the best port that provides the maximum SNR from the set of $\gamma_{k,n^{*}}$ values is selected (activated), i.e.,
		\begin{equation}\label{max-max_eq1+}
			k^{*} = \underset{k = 1, \dots, K}{\arg\max} \left\{\gamma_{k,n^{*}}\right\},
		\end{equation}
		where $k^*$ denotes the index of the best port. 
		
		Finally, the maximum achievable SNR at the $\text{R}_x$ for the selected indices $(k^*, n^*)$ is given by
		\begin{equation}\label{max-max_eq2+}
			\gamma_{\textit{Max-Max}} = \underset{k = 1, \dots, K}{\max} \left\{\gamma_{k,n^{*}}\right\}.
		\end{equation}
			
		\subsection{\textit{Max-Sum} Scheme} In this selection scheme, the system selects one RIS from the $N$ available RISs, while all $K$ ports are activated to combine the signals received from the chosen RIS at the receiver, $\text{R}_x$. To achieve this, the first step is to combine the received signals from each $\text{RIS}_{n}$ across all ports, i.e.,  
		\begin{equation}\label{max-sum_eq2_+}
			S_{n} = \sum_{k = 1}^{K} \gamma_{k,n}.
		\end{equation}
		Next, the best RIS is selected by identifying the one that offers the maximum summation of SNR across all ports, i.e.,
		\begin{equation}\label{max-sum_eq1}
			n^{*} = \underset{\substack{n = 1, \dots, N}}{\arg\max} \left\{S_{n}\right\}.
		\end{equation}
		Thus, the maximum sum of the received SNR at the receiver $\text{R}_x$ is represented as
		\begin{equation}\label{max-sum_eq2}
			\gamma_{\textit{Max-Sum}} = \underset{\substack{n = 1, \dots, N}}{\max} \left\{S_{n}\right\}.
		\end{equation}

\begin{algorithm} [t]
	\caption{Proposed \textit{Max-Sum} selection scheme.}\label{Max-Sum selection scheme}
	\begin{algorithmic}[1]
		\State \textbf{Input:} A set of $N$ RISs, a set of $K$ ports, and the initial SNR matrix, $\gamma_{k,n}$, for $k = 1, \dots, K$ and $n = 1, \dots, N$.
		\State \textbf{Output:} The index of the best RIS $n^*$, and the maximum sum of the received SNR at the receiver $\text{R}_x$, $\gamma_{\textit{Max-Sum}}$.
		
		\For{$n = 1$ to $N$}
		\State Combine the SNR values from all ports for each $\text{RIS}_n$, i.e.,
		\[
		S_{n} = \sum_{k = 1}^{K} \gamma_{k,n} \quad \text{(from \eqref{max-sum_eq2_+})}.
		\]
		\EndFor
		
		\State Select the best RIS $n^*$ that maximizes the total SNR across all ports, i.e.,
		\[
		n^{*} = \underset{\substack{n = 1, \dots, N}}{\arg\max} \left\{ S_{n} \right\} \quad \text{(from \eqref{max-sum_eq1})}.
		\]
		
		\State Compute the maximum summation of SNR for the selected RIS $n^*$, i.e.,
		\[
		\gamma_{\textit{Max-Sum}} = \underset{\substack{n = 1, \dots, N}}{\max} \left\{ S_{n} \right\} \quad \text{(from \eqref{max-sum_eq2})}.
		\]
		
		\State \textbf{Return:} The best RIS index $n^*$, and the corresponding maximum sum of SNR at the receiver, $\gamma_{\textit{Max-Sum}}$.
	\end{algorithmic}
\end{algorithm}


\section{Performance Analysis} \label{Performance Analysis}
In this section, we present the performance analysis of multi-RIS-assisted FAS, incorporating two novel selection schemes. To this end, we first obtain the CDF of the general expression for the SNR. Subsequently, the CDFs of the SNR for both proposed schemes are obtained. Thereafter, we introduce the system's OP and DOR. Finally, to gain deeper insights into the system's performance, we conducted an asymptotic analysis in the high-SNR regime, highlighting the diversity and coding gains achieved by the system.

\subsection{CDF of SNR}
In this study, we postulate that the variables $\alpha^{(i)}_{n}$ and $\beta^{(i)}_{n,k}$, as defined in \eqref{eq:gamma2}, follow the Nakagami-$m$ distributions. These random variables are characterized by the shape parameter, $m$, and the scale parameter, $\Omega$, which collectively determine their statistical behavior. According to \cite{CDF_theory}, the exact distribution of $\gamma_{k,n}$ in \eqref{eq:gamma2} can be effectively approximated by the first term of a Laguerre series expansion, yielding a Gamma distribution. Consequently, the CDF of $\gamma_{k,n}$ can be expressed as
		\begin{align}\label{eq:CDFgamma}
			F_{\gamma_{k,n}}(\gamma)= \frac{\Upsilon \left(a_{k,n},\sqrt{\frac{L^{2}_{n,1}L^{2}_{n,2} }{b^{2}_{k,n}}\frac{\gamma}{\bar{\gamma}}}\right)}{\Gamma (a_{k,n})}.
		\end{align}
        %
Here, $\Upsilon(\cdot, \cdot)$ denotes the lower incomplete Gamma function, defined as $\Upsilon(s, x) = \int_0^x t^{s-1} e^{-t} dt$, while $\Gamma(\cdot)$ represents the Gamma function, given by $\Gamma(s) = \int_0^\infty t^{s-1} e^{-t}dt$. Moreover, $m_{k,n,i}$ and $\Omega_{k,n,i}$ correspond to the shape and scale parameters of the Nakagami-\emph{m} distribution for the first and second transmission hops, where $i \in \{1, 2\}$. Furthermore, the parameters $a_{k,n}$ and $b_{k,n}$ are defined in \eqref{eq:a} and \eqref{eq:b}, respectively. 
		\begin{figure*}[b!]
			\begin{align}\label{eq:a}
				a_{k,n} = \frac{m_{n,1} m_{k,n,2} M_{n} \Gamma (m_{n,1})^2 \Gamma (m_{k,n,2})^2}{m_{n,1} m_{k,n,2} \Gamma (m_{n,1})^2 \Gamma (m_{k,n,2})^2 - \Gamma \left(m_{n,1} + \frac{1}{2}\right)^2 \Gamma \left(m_{k,n,2} + \frac{1}{2}\right)^2} - M_{n}.
			\end{align}
			\hrule
			\begin{align}\label{eq:b}
				b_{k,n} = \frac{m_{n,1} m_{k,n,2} \Gamma (m_{n,1})^2 \Gamma (m_{k,n,2})^2 - \Gamma \left(m_{n,1} + \frac{1}{2}\right)^2 \Gamma \left(m_{k,n,2} + \frac{1}{2}\right)^2}{\sqrt{\frac{m_{n,1}}{\Omega_{n,1}}} \Gamma (m_{n,1}) \Gamma \left(m_{n,1} + \frac{1}{2}\right) \sqrt{\frac{m_{k,n,2}}{\Omega_{k,n,2}}} \Gamma (m_{k,n,2}) \Gamma \left(m_{k,n,2} + \frac{1}{2}\right)}.
			\end{align}
			\hrule
		\end{figure*}

\subsection{CDF of the SNR of the Max-Max selection scheme}
To derive the CDF of the \textit{Max-Max} selection scheme, it is first necessary to obtain the CDF of $\gamma_{k,n^{*}}$. Subsequently, the CDF of $\gamma_{\textit{Max-Max}}$ can be determined. In this regard, propositions \ref{proposition_CDF++} and \ref{proposition_CDF_Max-Max} present the CDF expressions for $\gamma_{k,n^{*}}$ and $\gamma_{\textit{Max-Max}}$, respectively. 
\begin{proposition}\label{proposition_CDF++}
	The CDF of $\gamma_{k,n^{*}}$, as defined in \eqref{max-max_eq2}, denoted by  $F_{\gamma_{k,n^{*}}}\left(\gamma\right)$, is formulated as
		\begin{align} \label{eq:CDF5+}
			F_{\gamma_{k,n^{*}}}(\gamma) &=\prod_{n = 1}^{N}\sum_{j = 0}^{\infty} \frac{\left(-1\right)^{j}\left(\frac{L^{2}_{n,1}L^{2}_{n,2} }{b^{2}_{k,n}}\frac{\gamma}{\bar{\gamma}}\right)^{\frac{a_{k,n}+j}{2}}}{j!\left(a_{k,n} + j\right)\Gamma (a_{k,n})}.
		\end{align}
	\end{proposition}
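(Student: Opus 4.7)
The plan is to exploit the fact that $\gamma_{k,n^{*}} = \max_{n} \gamma_{k,n}$ is an order statistic over $N$ random variables indexed by the RIS number $n$. Since the channel coefficients $\alpha^{(i)}_{n}$ and $\beta^{(i)}_{n,k}$ for different RISs $n$ involve physically distinct propagation paths, it is natural to assume that $\{\gamma_{k,n}\}_{n=1}^{N}$ are mutually independent for fixed $k$. Under this assumption, the CDF of the maximum factors as
\begin{equation*}
F_{\gamma_{k,n^{*}}}(\gamma) \;=\; \Pr\!\left[\max_{n} \gamma_{k,n} \le \gamma\right] \;=\; \prod_{n=1}^{N} F_{\gamma_{k,n}}(\gamma),
\end{equation*}
which reduces the problem to substituting the per-RIS marginal CDF from \eqref{eq:CDFgamma}.

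Next, I would expand each factor $\Upsilon(a_{k,n},\cdot)/\Gamma(a_{k,n})$ using the standard power-series representation of the lower incomplete Gamma function,
\begin{equation*}
\Upsilon(s,x) \;=\; \sum_{j=0}^{\infty} \frac{(-1)^{j}\, x^{s+j}}{j!\,(s+j)},
\end{equation*}
applied with $s = a_{k,n}$ and $x = \sqrt{L^{2}_{n,1}L^{2}_{n,2}\,\gamma/(b^{2}_{k,n}\bar{\gamma})}$. Substituting this series inside the product over $n$ and noting that the exponent of $x$ becomes $(a_{k,n}+j)/2$ after absorbing the square root yields exactly the claimed expression \eqref{eq:CDF5+}. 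The algebraic manipulation is otherwise mechanical: the constants $L^{2}_{n,1}L^{2}_{n,2}/b^{2}_{k,n}$ and $\gamma/\bar{\gamma}$ are raised to the power $(a_{k,n}+j)/2$, while the denominator collects $j!(a_{k,n}+j)\Gamma(a_{k,n})$ from the series and the normalization.

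The main obstacle, as I see it, is not the algebra but justifying the two modeling assumptions that the proof silently relies on. First, the independence across RISs is essential for the product-of-CDFs step; this should be argued from the system model (distinct scatterers, non-overlapping propagation environments for the $N$ RISs). Second, the use of the Laguerre/Gamma approximation for $\gamma_{k,n}$ from \cite{CDF_theory} is only an approximation of the true distribution of the squared sum of products of Nakagami-$m$ variables; the same approximation is implicitly inherited by the max-CDF, so the resulting $F_{\gamma_{k,n^{*}}}$ should be stated as an approximate expression rather than an exact one. Once these two points are addressed, the remaining derivation is a one-line invocation of independence followed by term-by-term series expansion, and no further analytic obstacle arises.
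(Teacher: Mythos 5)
Your proposal matches the paper's own proof: it factors the CDF of the maximum over the $N$ independent per-RIS SNRs as $\prod_{n=1}^{N}F_{\gamma_{k,n}}(\gamma)$, substitutes the Gamma-approximated marginal CDF from \eqref{eq:CDFgamma}, and applies the series expansion $\Upsilon(s,x)=\sum_{j=0}^{\infty}\frac{(-1)^{j}x^{s+j}}{j!(s+j)}$ to arrive at \eqref{eq:CDF5+}. Your added remarks on independence across RISs and on the approximate (Laguerre/Gamma) nature of the marginals are valid caveats but do not change the argument, which is essentially identical to the paper's.
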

\begin{proof}
	To determine the $F_{\gamma_{k,n^{*}}}\left(\gamma\right)$, we need to compute the cumulative probability that $\gamma_{k,n^{*}}$ is less than or equal to a given value $\gamma$. From \eqref{max-max_eq2}, $\gamma_{k,n^{*}}$ can be considered as the maximum of a set of independent random variables, $\gamma_{k,n}$. Thus, $F_{\gamma_{k,n^{*}}}\left(x\right)$ can be expressed as a product of the CDFs of the individual $\gamma_{k,n}$ values, i.e.,
	\begin{align} \label{eq:CDF1+}
		F_{\gamma_{k,n^{*}}}(\gamma) &= \mathbb{P}\left(\gamma_{k,n^{*}} \leq \gamma\right) \nonumber\\
		&= \mathbb{P}\left(\underset{n = 1, \dots, N}{\max}\left\{\gamma_{k,n}\right\} \leq \gamma\right)\nonumber\\
		&= \mathbb{P}\left(\max\left(\gamma_{k,1}, \dots, \gamma_{k,N}\right) \leq \gamma\right)\nonumber\\
		&= \mathbb{P}\left(\gamma_{k,1}\leq \gamma, \dots, \gamma_{k,N} \leq \gamma\right)\nonumber\\
		&= \mathbb{P}\left(\gamma_{k,1}\leq \gamma\right) \dots \mathbb{P}\left(\gamma_{k,N} \leq \gamma\right)
		\nonumber\\
		&=F_{\gamma_{k,1}}(\gamma) \dots F_{\gamma_{k,N}}(\gamma)
		\nonumber\\
		&=\prod_{n = 1}^{N}F_{\gamma_{k,n}}(\gamma).
	\end{align}
Here $\mathbb{P}\left(\cdot\right)$ refers to the probability symbol. 
%
By substituting \eqref{eq:CDFgamma} into \eqref{eq:CDF1+}, we obtain
	\begin{align} \label{eq:CDF2+}
		F_{\gamma_{k,n^{*}}}(\gamma) &=\prod_{n = 1}^{N}\frac{\Upsilon \left(a_{k,n},\sqrt{\frac{L^{2}_{n,1}L^{2}_{n,2} }{b^{2}_{k,n}}\frac{\gamma}{\bar{\gamma}}}\right)}{\Gamma (a_{k,n})}.
	\end{align}
	However, the lower incomplete Gamma function in \eqref{eq:CDF2+} can be expanded into series \cite{Grad}, i.e.,
	\begin{align} \label{eq:CDF3+}
		\Upsilon\left(s, x \right) = \sum_{j = 0}^{\infty} \frac{\left(-1\right)^{j}x^{s+j}}{j!\left(s + j\right)}.
	\end{align}
	By substituting \eqref{eq:CDF3+} into \eqref{eq:CDF2+}, we then obtain
	\begin{align} \label{eq:CDF4+}
		F_{\gamma_{k,n^{*}}}(\gamma) &=\prod_{n = 1}^{N}\sum_{j = 0}^{\infty} \frac{\left(-1\right)^{j}\left(\sqrt{\frac{L^{2}_{n,1}L^{2}_{n,2} }{b^{2}_{k,n}}\frac{\gamma}{\bar{\gamma}}}\right)^{a_{k,n}+j}}{j!\left(a_{k,n} + j\right)\Gamma (a_{k,n})}.
	\end{align} 
	With some mathematical simplifications, we obtain Proposition \ref{proposition_CDF++}, and the proof ends.
\end{proof}

\begin{corollary}\label{corollary1}
If we assume that the $N$ random variables in Proposition \ref*{proposition_CDF++} are identical, the parameters $L_{n,1}$, $L_{n,2}$, $b_{k,n}$, and $a_{k,n}$ will be the same for all $n$, i.e., $L_{n,1} = L_{1}$, $L_{n,2} = L_{2}$, $a_{k,n} = a_k$, and $b_{k,n} = b_k$. In this case, the expression simplifies as follows
\begin{align} \label{eq:CDF5++}
	F_{\gamma_{k,n^{*}}}(\gamma) &= \left(\sum_{j = 0}^{\infty} \frac{\left(-1\right)^{j}}{j! \left(a_k + j\right) \Gamma(a_k)} \left(\frac{L^{2}_{1}L^{2}_{2}}{b_k^2} \frac{\gamma}{\bar{\gamma}}\right)^{\frac{a_k + j}{2}} \right)^N.
\end{align}
\end{corollary}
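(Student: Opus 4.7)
The corollary is a direct specialization of Proposition \ref{proposition_CDF++}, so my plan is essentially to verify that the product-over-$n$ expression collapses into a pure power when the per-RIS parameters become $n$-independent. I would not attempt any new derivation; instead I would cite \eqref{eq:CDF5+} as the starting point and then perform the substitution cleanly.

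Concretely, the first step is to recall that in Proposition \ref{proposition_CDF++} the only $n$-dependent quantities appearing inside the $n$-th factor of the outer product are $L_{n,1}$, $L_{n,2}$, $a_{k,n}$, and $b_{k,n}$. Under the identicality hypothesis $L_{n,1}=L_1$, $L_{n,2}=L_2$, $a_{k,n}=a_k$, $b_{k,n}=b_k$, each of these factors becomes a single, common series
\begin{equation*}
T(\gamma) \;=\; \sum_{j=0}^{\infty}\frac{(-1)^{j}}{j!(a_k+j)\Gamma(a_k)}\left(\frac{L_1^{2}L_2^{2}}{b_k^{2}}\,\frac{\gamma}{\bar\gamma}\right)^{\!(a_k+j)/2},
\end{equation*}
which no longer depends on the product index $n$. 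The second step is simply to invoke $\prod_{n=1}^{N} T(\gamma) = \bigl(T(\gamma)\bigr)^{N}$, yielding exactly \eqref{eq:CDF5++}.

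There is essentially no obstacle: the corollary is a notational specialization, and the only thing to be careful about is bookkeeping of the exponent $(a_k+j)/2$ and of the combined constant $L_1^{2}L_2^{2}/b_k^{2}$, to ensure the outer exponent $N$ sits on the whole bracket rather than being absorbed into the summand. I would therefore present the proof as a two-line derivation: (i) rewrite the product in \eqref{eq:CDF5+} after parameter identification, (ii) replace the $N$-fold product of equal factors by an $N$-th power, and note that this is the claimed expression.
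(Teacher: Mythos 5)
Your proposal is correct and matches the paper's (implicit) reasoning exactly: since the parameter identification makes each factor of the product in \eqref{eq:CDF5+} independent of $n$, the $N$-fold product collapses to the $N$-th power of the common series, which is \eqref{eq:CDF5++}. Nothing further is needed.
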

\begin{corollary}\label{corollary2}
Following Corollary \ref{corollary1}, if we assume that only one port is available on the FAS-assisted $\text{R}_{x}$, i.e., $k = 1$, the parameters are assumed to be $a_{k} = a$ and $b_{k} = b$. In this case, a multi-RIS-assisted system is considered, and the expression in Corollary \ref{corollary1} simplifies to an exact closed-form as follows
	\begin{align} \label{eq:CDF5+++}
		F_{\gamma_{\text{Max}}}(\gamma) &= \left( \sum_{j = 0}^{\infty} \frac{\left(-1\right)^{j}}{j! \left(a + j\right) \Gamma(a)} \left(\frac{L^{2}_{1}L^{2}_{2}}{b^2} \frac{\gamma}{\bar{\gamma}}\right)^{\frac{a + j}{2}} \right)^N.
	\end{align}
\end{corollary}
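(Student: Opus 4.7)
The plan is to derive Corollary \ref{corollary2} as a direct specialization of Corollary \ref{corollary1}. First, I would observe that when the receiver has only a single port ($k=1$), the second stage of the \textit{Max-Max} selection scheme---the maximization over ports in \eqref{max-max_eq1+}---becomes vacuous, so the resulting SNR reduces to $\gamma_{\textit{Max-Max}} = \gamma_{1,n^{*}}$, i.e., simply the maximum of the per-RIS SNRs. I would then rename this single-port quantity to $\gamma_{\text{Max}}$ to emphasize that it now represents the final received SNR at the receiver rather than an intermediate step in a two-stage selection procedure.

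Next, I would invoke the identity-in-distribution assumption inherited from Corollary \ref{corollary1}, under which $L_{n,1}=L_{1}$, $L_{n,2}=L_{2}$, $a_{k,n}=a_{k}$, and $b_{k,n}=b_{k}$ for all $n$. Setting $k=1$ and dropping the now-redundant port subscript by writing $a_{k}=a$ and $b_{k}=b$, the expression in \eqref{eq:CDF5++} specializes directly to the claimed form in \eqref{eq:CDF5+++}: the product over $n$ collapses into an $N$-th power because every factor is identical, and every term inside the Laguerre series becomes independent of $k$.

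The only point worth examining carefully is the meaning of the word \emph{exact} in the statement. Although the inner summation is still an infinite series arising from the expansion of the lower incomplete Gamma function via \eqref{eq:CDF3+}, the resulting expression is the exact CDF of $\gamma_{\text{Max}}$ under the Laguerre/Gamma distributional model established in \eqref{eq:CDFgamma}, rather than an intermediate quantity that would need to undergo a further stage of maximization over the ports. This is what distinguishes Corollary \ref{corollary2} as a closed-form end result and is, in effect, the only substantive observation required for the proof; no analytical obstacle arises beyond the bookkeeping of dropping the port index and collapsing the product into a power.
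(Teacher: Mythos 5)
Your proposal is correct and follows essentially the same route as the paper, which states Corollary \ref{corollary2} without a separate proof precisely because it is the direct specialization of Corollary \ref{corollary1} (itself from Proposition \ref{proposition_CDF++}) to a single port with the port subscript dropped ($a_k=a$, $b_k=b$), the product over $n$ having already collapsed to the $N$-th power under the identical-parameter assumption. Your remark on the meaning of ``exact''---that with $K=1$ no port correlation or further maximization stage intervenes, so the expression is the exact CDF of the maximum of $N$ independent variables under the Gamma (Laguerre first-term) marginal model---matches the paper's intent.
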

	\begin{proposition}\label{proposition_CDF_Max-Max}
		The CDF of $\gamma_{\textit{Max-Max}}$, as defined in \eqref{max-max_eq2+}, denoted by $F_{\gamma_{\textit{Max-Max}}}\left(\gamma\right)$ can be expressed as 
			\begin{align} \label{eq:FinalExpression}
				&F_{\gamma_{\textit{Max-Max}}}(\gamma)\\ &= \Phi_{\Sigma}\left( 
				\Phi^{-1}\left(
				\prod_{n=1}^{N} \sum_{j=0}^{\infty} 
				\frac{\left(-1\right)^j \left( \frac{L^{2}_{n,1}L^{2}_{n,2}}{b^2_{1,n}} \frac{\gamma}{\bar{\gamma}} \right)^{\frac{a_{1,n} + j}{2}}}{j! \, \left(a_{1,n} + j\right) \Gamma(a_{1,n})}
				\right), \right. \nonumber \\
				&\quad \left. \ldots, \right. \nonumber
				\quad \left. \Phi^{-1}\left(
				\prod_{n=1}^{N} \sum_{j=0}^{\infty} 
				\frac{\left(-1\right)^j \left( \frac{L^{2}_{n,1}L^{2}_{n,2}}{b^2_{K,n}} \frac{\gamma}{\bar{\gamma}} \right)^{\frac{a_{K,n} + j}{2}}}{j! \, \left(a_{K,n} + j\right) \Gamma(a_{K,n})}
				\right)
				\right). 
			\end{align}
		Here $\Phi^{-1}$ is the inverse of the standard normal CDF. Additionally, $\Phi_{\Sigma}$ is the multivariate normal CDF with mean vector $\bm{0}$ and correlation matrix $\Sigma$, which captures the dependencies between the variables, $\gamma_{k,n^*}$, and is given as 
		\begin{equation}
			\Sigma =
			\begin{bmatrix}
				\eta_{1,1} & \eta_{1,2} & \dots & \eta_{1,K} \\
				\eta_{2,1} & \eta_{2,2} & \dots & \eta_{2,K} \\
				\vdots & \vdots & \ddots & \vdots \\
				\eta_{K,1} & \eta_{K,2} & \dots & \eta_{K,K}
			\end{bmatrix},
		\end{equation}
		where $\eta_{u, \tilde{u}}$ is the spatial correlation between any two ports, $u = f^{-1}(u_1, u_2)$ and $\tilde{u} = f^{-1}(\tilde{u}_1, \tilde{u}_2)$, and can be characterized as
		\begin{align}
			\eta_{u,\tilde{u}} = \text{sinc}\left(
			2 \frac{|u_1 - \tilde{u}_1|}{K_1 - 1} d_1 +
			2 \frac{|u_2 - \tilde{u}_2|}{K_2 - 1} d_2
			\right),
		\end{align}
		in which $\text{sinc}(t) = \frac{\sin(\pi t)}{\pi t}$ is the sinc function. 
	\end{proposition}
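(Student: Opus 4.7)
The plan is to begin from the definition of $\gamma_{\textit{Max-Max}} = \max_{k=1,\ldots,K}\gamma_{k,n^*}$ and write
\begin{align*}
F_{\gamma_{\textit{Max-Max}}}(\gamma)
&= \mathbb{P}\!\left(\max_{k=1,\ldots,K}\gamma_{k,n^*}\leq\gamma\right) \\
&= \mathbb{P}\!\left(\gamma_{1,n^*}\leq\gamma,\ldots,\gamma_{K,n^*}\leq\gamma\right),
\end{align*}
so the problem reduces to evaluating the joint CDF of the per-port random variables $(\gamma_{1,n^*},\ldots,\gamma_{K,n^*})$ at a common threshold $\gamma$. Unlike in the proof of Proposition \ref{proposition_CDF++}, these variables are not independent: they are all computed over the same physical FAS aperture, so they inherit the Jake-type spatial correlation between ports, and this dependence survives the per-port max over the $N$ RISs.

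Next, I would invoke Sklar's theorem to represent the joint CDF through a copula and adopt the Gaussian copula model that is now standard in the FAS performance-analysis literature (as used in several of the references cited in the related work). Concretely, letting $F_{\gamma_{k,n^*}}$ denote the marginal CDF already obtained in Proposition \ref{proposition_CDF++} with port index $k$, I would write
\begin{equation*}
F_{\gamma_{\textit{Max-Max}}}(\gamma) = \Phi_{\Sigma}\!\left(\Phi^{-1}\!\left(F_{\gamma_{1,n^*}}(\gamma)\right),\ldots,\Phi^{-1}\!\left(F_{\gamma_{K,n^*}}(\gamma)\right)\right),
\end{equation*}
and then plug in the closed-form marginal from \eqref{eq:CDF5+} to recover the statement of the proposition.

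The third step is to identify the entries $\eta_{u,\tilde u}$ of $\Sigma$. I would appeal to the 2D grid model of Section \ref{System Model}: the ports sit on a $K_1\times K_2$ lattice of side lengths $\lambda d_1$ and $\lambda d_2$, so ports $u=f^{-1}(u_1,u_2)$ and $\tilde u=f^{-1}(\tilde u_1,\tilde u_2)$ are separated along each axis by a fraction of the wavelength proportional to $|u_1-\tilde u_1|/(K_1-1)$ and $|u_2-\tilde u_2|/(K_2-1)$. Under isotropic scattering (Jakes' model) the spatial correlation of the complex channel over such a separation collapses to a sinc of that normalized distance, which, summed along the two axes, yields the claimed formula for $\eta_{u,\tilde u}$.

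The main obstacle I expect is justifying, rather than merely postulating, the Gaussian-copula step: the $\gamma_{k,n^*}$ are sums of magnitudes of correlated Nakagami-$m$ variables followed by a per-port maximum over RISs, so their true joint law is not exactly Gaussian-coupled. I would therefore present the copula as an analytical approximation, argue that it preserves the marginals (which are exact by Proposition \ref{proposition_CDF++}) and the pairwise correlation structure (which is what governs both the OP and the DOR to first order), and defer the quantitative accuracy check to the Monte Carlo validation later in the paper. A secondary subtlety is that the correlation is taken between the post-max variables $\gamma_{k,n^*}$ rather than the raw $\gamma_{k,n}$; since the max-over-$n$ is applied identically and independently across $n$ for each fixed $k$, the induced correlation pattern inherits the port geometry and is well-approximated by the same sinc kernel, which is the implicit assumption underlying the stated $\Sigma$.
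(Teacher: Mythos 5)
Your proposal follows essentially the same route as the paper's proof: reduce $F_{\gamma_{\textit{Max-Max}}}(\gamma)$ to the joint CDF of $(\gamma_{1,n^*},\ldots,\gamma_{K,n^*})$ at the common threshold, represent it via a Gaussian copula $\Phi_{\Sigma}(\Phi^{-1}(\cdot),\ldots,\Phi^{-1}(\cdot))$, and substitute the marginals from Proposition \ref{proposition_CDF++}. Your added remarks on treating the copula and the post-max sinc correlation structure as modeling approximations are in fact more explicit than the paper's own argument, but the substance of the derivation is the same.
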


	\begin{proof}
	To determine $F_{\gamma_{\textit{Max-Max}}}\left(\gamma\right)$, we need to compute the cumulative probability that $\gamma_{\textit{Max-Max}}$ is less than or equal to a given value $\gamma$
	\begin{align} \label{eq:CDF1}
		F_{\gamma_{\textit{Max-Max}}}(\gamma) &= \mathbb{P}\left(\gamma_{\textit{Max-Max}} \leq \gamma\right) \nonumber\\
		&= \mathbb{P}\left(\underset{k = 1, \dots, K}{\max} \left\{\gamma_{k,n^{*}}\right\} \leq \gamma\right)\nonumber\\
		&= \mathbb{P}\left(\max\left(\gamma_{1,n^{*}}, \dots, \gamma_{K,n^{*}}\right) \leq \gamma\right)\nonumber\\
		&= F_{\gamma_{1,n^{*}}, \dots, \gamma_{K,n^{*}}}\left(\gamma\right),
	\end{align}
where $F_{\gamma_{1,n^*}, \gamma_{2,n^*}, \ldots, \gamma_{K,n^*}}$ is the joint CDF of $\gamma_{1,n^*}, \ldots, \gamma_{K,n^*}$. The joint CDF of the $K$-dimensional random vector, $\bm{\gamma} = (\gamma_{1,n^*}, \gamma_{2,n^*}, \ldots, \gamma_{K,n^*})$, can be expressed in terms of a copula function $\mathbb{C}$ as
\begin{align}\label{eq:CDF*}
	F_{\gamma_{1,n^*}, \gamma_{2,n^*}, \ldots, \gamma_{K,n^*}}(\gamma) &= \mathbb{C}\left(F_{\gamma_{1,n^*}}(\gamma),\right. \nonumber \\
	&\quad \left. \ldots, F_{\gamma_{K,n^*}}(\gamma)\right).
\end{align}
For the Gaussian copula, the copula function is defined as
\begin{align}\label{eq:CDF**}
	\mathbb{C}(u_1, u_2, \ldots, u_K) &= \Phi_{\Sigma}\left(\Phi^{-1}(u_1), \Phi^{-1}(u_2),\right. \nonumber \\
	&\quad \left. \ldots, \Phi^{-1}(u_K)\right),
\end{align}
where $u_k = F_{\gamma_{k,n^*}}(\gamma)$ are the marginal CDF values for $k = 1, \ldots, K$. By substituting the Gaussian copula in \eqref{eq:CDF**} into the joint CDF in \eqref{eq:CDF*}, we obtain
\begin{align}\label{key}
	F_{\gamma_{\textit{Max-Max}}}(\gamma) &= \mathbb{C}\left(F_{\gamma_{1,n^*}}(\gamma), F_{\gamma_{2,n^*}}(\gamma), \ldots, F_{\gamma_{K,n^*}}(\gamma)\right) \nonumber \\
	&= \Phi_{\Sigma}\left(\Phi^{-1}\left(F_{\gamma_{1,n^*}}(\gamma)\right), \Phi^{-1}\left(F_{\gamma_{2,n^*}}(\gamma)\right), \right. \nonumber \\
	&\quad \left.\ldots, \Phi^{-1}\left(F_{\gamma_{K,n^*}}(\gamma)\right)\right).
\end{align}
Utilizing Proposition \ref{proposition_CDF++} in \eqref{key}, we obtain the final expression for $F_{\gamma_{\textit{Max-Max}}}(\gamma)$ in Proposition \ref{proposition_CDF_Max-Max}, and the proof ends. 
\end{proof}

\begin{corollary}
Assuming that all ports are independently distributed, then, from \eqref{eq:CDF1}, 
\begin{align} \label{eq:CDF1iid}
	F_{\gamma_{\text{Max-Max}}}(\gamma) &= \mathbb{P}\left(\max\left(\gamma_{1,n^{*}}, \dots, \gamma_{K,n^{*}}\right) \leq \gamma\right)\nonumber\\
	&=\mathbb{P}\left(\gamma_{1,n^{*}}\leq \gamma, \dots, \gamma_{K,n^{*}}\leq \gamma \right)\nonumber\\
	&=\mathbb{P}\left(\gamma_{1,n^{*}}\leq \gamma\right) \dots \mathbb{P}\left(\gamma_{K,n^{*}}\leq \gamma \right)\nonumber\\
	&=F_{\gamma_{1,n^{*}}}\left(\gamma\right)\dots F_{\gamma_{K,n^{*}}}\left(\gamma\right)\nonumber\\
	&=\prod_{k =1}^{K}F_{\gamma_{k,n^{*}}}\left(\gamma\right).
\end{align}
Using \eqref{eq:CDF2+}
	\begin{align} \label{eq:CDF2+iid}
	F_{\gamma_{\text{Max-Max}}}(\gamma) &=\prod_{k =1}^{K}\prod_{n = 1}^{N}\frac{\Upsilon \left(a_{k,n},\sqrt{\frac{L^{2}_{n,1}L^{2}_{n,2} }{b^{2}_{k,n}}\frac{\gamma}{\bar{\gamma}}}\right)}{\Gamma (a_{k,n})}.
\end{align}
In the case of the identical parameters, then 
	\begin{align} \label{eq:CDF2+iidd}
	F_{\gamma_{\text{Max-Max}}}(\gamma) &=\left(\frac{\Upsilon \left(a,\sqrt{\frac{L^{2}_{1}L^{2}_{2}}{b^{2}}\frac{\gamma}{\bar{\gamma}}}\right)}{\Gamma (a)}\right)^{KN}\nonumber\\
	&=\left(\sum_{j = 0}^{\infty} \frac{\left(-1\right)^{j}}{j! \left(a + j\right) \Gamma(a)} \left(\frac{L^{2}_{1}L^{2}_{2}}{b^2} \frac{\gamma}{\bar{\gamma}}\right)^{\frac{a + j}{2}}\right)^{KN}.
\end{align}
\end{corollary}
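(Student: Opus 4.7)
The plan is to derive the two simplified CDF forms by composing three observations already available in the paper: independence of the port-wise SNRs, the marginal CDF from Proposition~\ref{proposition_CDF++}, and the collapse of $KN$ distinct factors into a single factor under the identical-parameter assumption. The starting point is equation~\eqref{eq:CDF1}, which writes $F_{\gamma_{\textit{Max-Max}}}(\gamma)$ as the joint CDF of the $K$ port-wise SNRs $(\gamma_{1,n^{*}},\dots,\gamma_{K,n^{*}})$ evaluated at the common threshold $\gamma$.

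First, I would invoke the port-independence assumption to replace the joint event $\{\max_k \gamma_{k,n^{*}} \le \gamma\}$ with the intersection $\bigcap_k \{\gamma_{k,n^{*}} \le \gamma\}$ and factor its probability as $\prod_{k=1}^{K} F_{\gamma_{k,n^{*}}}(\gamma)$. This amounts to specializing Proposition~\ref{proposition_CDF_Max-Max} by taking the correlation matrix $\Sigma$ to be the identity, in which case the Gaussian copula $\Phi_{\Sigma}(\Phi^{-1}(u_1),\ldots,\Phi^{-1}(u_K))$ reduces to the independence copula $u_1 u_2 \cdots u_K$. Second, I would substitute the marginal $F_{\gamma_{k,n^{*}}}(\gamma)$ from Proposition~\ref{proposition_CDF++}, using its lower-incomplete-Gamma form given in~\eqref{eq:CDF2+}, to obtain the double product over $k$ and $n$ in~\eqref{eq:CDF2+iid}.

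Finally, under the additional assumption that the parameters are identical across all $k$ and $n$, i.e., $a_{k,n}=a$, $b_{k,n}=b$, $L_{n,1}=L_1$, and $L_{n,2}=L_2$, every one of the $KN$ factors coincides, so the double product collapses to a single factor raised to the $KN$-th power; applying the series representation of $\Upsilon$ from~\eqref{eq:CDF3+} inside this power then yields the closed-form infinite-series expression in~\eqref{eq:CDF2+iidd}. I do not anticipate a genuine obstacle here: the derivation is mechanical once one recognizes that independence reduces the copula to a product and that $n^{*}$ is computed port-by-port, so each $\gamma_{k,n^{*}}$ depends only on the $k$-th port's channel family, which is exactly what makes the port-independence hypothesis coherent. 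The only real bookkeeping care is ensuring that the inner maximization over $n$ defining $n^{*}$ is absorbed into each marginal through Proposition~\ref{proposition_CDF++} before the outer product over $k$ is taken.
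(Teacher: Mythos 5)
Your derivation is correct and follows essentially the same route as the paper: factor the joint CDF into $\prod_{k}F_{\gamma_{k,n^{*}}}(\gamma)$ under the port-independence assumption, substitute the marginal from \eqref{eq:CDF2+}, and collapse the double product to a $KN$-th power under identical parameters before applying the series expansion \eqref{eq:CDF3+}. Your added remark that independence corresponds to taking $\Sigma$ as the identity in Proposition~\ref{proposition_CDF_Max-Max}, reducing the Gaussian copula to the product copula, is a correct and harmless elaboration of the same argument.
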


\subsection{CDF of \textit{Max-Sum} selection scheme}
To derive the CDF of the \textit{Max-Sum} selection scheme, we begin by obtaining the CDF of $S_{n}$. Once this is determined, we can proceed to find the CDF of $\gamma_{\textit{Max-Sum}}$. In this context, Propositions \ref{proposition_CDF_S_n} and \ref{proposition_CDF_Max-Sum} provide the CDF expressions for $S_{n}$ and $\gamma_{\textit{Max-Sum}}$, respectively.

\begin{proposition}\label{proposition_CDF_S_n}
The CDF of $S_n$, as defined in \eqref{max-sum_eq2_+} and denoted by $F_{S_{n}}\left(\gamma\right)$, can be expressed as 
\begin{equation}\label{S_n}
    F_{S_n}(\gamma) = \sum_{j = 0}^{\infty} \frac{\left(-1\right)^{j}\left(\sum_{k=1}^{K}\frac{L^{2}_{n,1}L^{2}_{n,2} }{b^{2}_{k,n}}\frac{\gamma}{\bar{\gamma}}\right)^{\frac{\sum_{k=1}^{K} a_{k,n}+j}{2}}}{j!\left(\sum_{k=1}^{K} a_{k,n} + j\right)\Gamma (\sum_{k=1}^{K} a_{k,n})}.
\end{equation}
\end{proposition}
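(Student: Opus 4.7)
The plan is to obtain the CDF of $S_n=\sum_{k=1}^{K}\gamma_{k,n}$ by first rewriting each $\gamma_{k,n}$ as a squared Gamma variable (using the representation implicit in \eqref{eq:CDFgamma}) and then re-applying the Laguerre first-term fit of \cite{CDF_theory} to the aggregate $S_n$, followed by the same power-series identity \eqref{eq:CDF3+} that was used in the proof of Proposition \ref{proposition_CDF++}.

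First, I would observe that \eqref{eq:CDFgamma} is exactly the statement that, with $c_{k,n}\triangleq L^{2}_{n,1}L^{2}_{n,2}/b^{2}_{k,n}$, the auxiliary variable $Z_{k,n}=\sqrt{c_{k,n}\gamma_{k,n}/\bar{\gamma}}$ follows a Gamma distribution with shape $a_{k,n}$ and unit scale. Equivalently, $\gamma_{k,n}=Z_{k,n}^{2}\bar{\gamma}/c_{k,n}$, so that $S_n$ is a weighted sum of squared Gammas. This rewriting mirrors exactly the step by which $\gamma_{k,n}$ was itself derived from the Nakagami-$m$ envelopes via the moment-matching Laguerre series.

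Second, I would apply the same Laguerre first-term approximation to $S_n$. The procedure matches the cumulants of $S_n$ against a distribution of the parametric family $\Upsilon(A,\sqrt{D\gamma/\bar{\gamma}})/\Gamma(A)$. Because the relevant moments of $\gamma_{k,n}$ enter additively over $k$, the shape parameter aggregates as $A_n=\sum_{k=1}^{K}a_{k,n}$ and the squared-argument coefficient aggregates as $D_n=\sum_{k=1}^{K}c_{k,n}$, yielding
\begin{equation*}
F_{S_n}(\gamma)\approx\frac{\Upsilon\!\left(\sum_{k=1}^{K}a_{k,n},\;\sqrt{\sum_{k=1}^{K}\frac{L^{2}_{n,1}L^{2}_{n,2}}{b^{2}_{k,n}}\frac{\gamma}{\bar{\gamma}}}\right)}{\Gamma\!\left(\sum_{k=1}^{K}a_{k,n}\right)}.
\end{equation*}

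Third, I would invoke the incomplete-Gamma series expansion \eqref{eq:CDF3+} with $s=\sum_{k}a_{k,n}$ and $x=\sqrt{\sum_{k}c_{k,n}\gamma/\bar{\gamma}}$. After collecting the exponents $s+j$ into the factor $(\sum_{k}c_{k,n}\gamma/\bar{\gamma})^{(s+j)/2}$, this reproduces exactly the series form stated in \eqref{S_n}, completing the derivation.

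The main obstacle is the second step: justifying why the Laguerre first-term approximation of a sum of squared Gammas with heterogeneous scales again belongs to the same parametric family and why the correct aggregation rule is additive in both $a_{k,n}$ and $c_{k,n}$. A careful route is to recompute the first two (fractional) moments of $S_n$ directly from the representation $S_n=\bar{\gamma}\sum_k Z_{k,n}^{2}/c_{k,n}$ and verify that the matched Gamma surrogate has shape $\sum_k a_{k,n}$ and argument coefficient $\sum_k c_{k,n}$. Any residual port-correlation effects are absorbed into the higher-order Laguerre terms that the first-term approximation discards, consistent with the independence-style treatment of the ports used throughout this subsection (the copula-based correlation is reintroduced only at the Proposition \ref{proposition_CDF_Max-Max}/Max-Sum aggregation stage).
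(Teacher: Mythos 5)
Your proposal follows essentially the same route as the paper: assume the ports' SNRs are independent, approximate $S_n$ by a single member of the same Gamma-family surrogate with aggregated shape $\sum_{k}a_{k,n}$ and argument coefficient $\sum_{k}L^{2}_{n,1}L^{2}_{n,2}/b^{2}_{k,n}$, and then apply the series expansion \eqref{eq:CDF3+} of the lower incomplete Gamma function. The paper justifies the aggregation with a one-line appeal to the sum-of-independent-Gammas property (itself only an approximation here, since the $\gamma_{k,n}$ are scaled squared Gammas), so your more explicit moment-matching framing is simply a slightly more careful wording of the same approximate step rather than a different proof.
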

\begin{proof}
    Finding the exact expression for $F_{S_{n}}\left(\gamma\right)$ is intractable. Here, we derive an approximate expression by assuming independent Gamma-distributed RVs. Therefore, $S_n$ in \eqref{max-sum_eq2_+} is the sum of independent Gamma-distributed RVs, and its CDF can be calculated as
\begin{equation}
    F_{S_n}(\gamma) = \mathbb{P}(S_n \le \gamma).
\end{equation}
We now use the property that the sum of independent Gamma-distributed RVs follows another Gamma distribution, i.e, 
\begin{equation}\label{S_n_1}
    F_{S_n}(\gamma) = \frac{\Upsilon \left(\sum_{k=1}^{K} a_{k,n}, \sqrt{\sum_{k=1}^{K}\frac{L^{2}_{n,1}L^{2}_{n,2}}{b_{k,n}^{2}} \frac{\gamma}{\bar{\gamma}}} \right)}{\Gamma \left( \sum_{k=1}^{K} a_{k,n} \right)}.
\end{equation}
By expanding the lower incomplete Gamma function in \eqref{S_n_1} into a series as in \eqref{eq:CDF3+}, we obtain the corresponding CDF in Proposition \ref{proposition_CDF_S_n}, and the proof ends.
\end{proof}

\begin{proposition}\label{proposition_CDF_Max-Sum}
The CDF of $\gamma_{\textit{Max-Sum}}$, as defined in \eqref{max-sum_eq2} and denoted by $F_{\gamma_{\textit{Max-Sum}}}\left(\gamma\right)$, is given by
\begin{equation}
    F_{\gamma_{\textit{Max-Sum}}}(\gamma) = \prod_{n=1}^{N} \sum_{j = 0}^{\infty} \frac{\left(-1\right)^{j}\left(\sum_{k=1}^{K}\frac{L^{2}_{n,1}L^{2}_{n,2} }{b^{2}_{k,n}}\frac{\gamma}{\bar{\gamma}}\right)^{\frac{\sum_{k=1}^{K} a_{k,n}+j}{2}}}{j!\left(\sum_{k=1}^{K} a_{k,n} + j\right)\Gamma (\sum_{k=1}^{K} a_{k,n})}.
\end{equation}
\end{proposition}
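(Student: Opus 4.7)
The plan is to mirror the structure of Proposition \ref{proposition_CDF_Max-Max} but with the simpler observation that once the inner sum $S_n$ has been handled by Proposition \ref{proposition_CDF_S_n}, the outer stage of the \textit{Max-Sum} scheme is just a maximum over the $N$ RIS indices, and nothing joint needs to be modelled across ports anymore.

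First, I would rewrite $\gamma_{\textit{Max-Sum}}$ using its definition in \eqref{max-sum_eq2}, namely $\gamma_{\textit{Max-Sum}} = \max_{n=1,\ldots,N} S_n$, and then expand the CDF as
\begin{align*}
F_{\gamma_{\textit{Max-Sum}}}(\gamma) &= \mathbb{P}\!\left(\max_{n=1,\ldots,N} S_n \leq \gamma \right) \\
&= \mathbb{P}\!\left(S_1 \leq \gamma,\, \ldots,\, S_N \leq \gamma \right).
\end{align*}
The key observation is that the random variables $\{S_n\}_{n=1}^{N}$ involve disjoint sets of fading channels: $S_n$ depends only on the amplitudes $\alpha_n^{(i)}$ and $\beta_{n,k}^{(i)}$ associated with $\text{RIS}_n$, and by the standard assumption that the cascaded channels through physically distinct RISs are mutually independent, the $S_n$ are independent across $n$ (even though, within a given $n$, the summands across $k$ were \emph{not} independent and were handled approximately in Proposition \ref{proposition_CDF_S_n}). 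Consequently the joint CDF factorises and
\begin{equation*}
F_{\gamma_{\textit{Max-Sum}}}(\gamma) = \prod_{n=1}^{N} F_{S_n}(\gamma).
\end{equation*}

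Substituting the series form of $F_{S_n}(\gamma)$ from Proposition \ref{proposition_CDF_S_n} directly into this product gives the claimed expression, completing the proof. Note that no copula-based construction is required here, in contrast with Proposition \ref{proposition_CDF_Max-Max}: the port-level spatial correlation is already absorbed inside each $F_{S_n}$ through the collapsed Gamma-sum approximation, so the outer max only sees independent scalar RVs.

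The only place where subtlety could creep in is the independence justification for the factorisation step; if one wished to be more careful, one could condition on the port-side correlation structure and note that the cascades through different $\text{RIS}_n$ use disjoint reflecting elements and disjoint sets of amplitude variables, so conditional and unconditional independence coincide. I do not anticipate this being a genuine obstacle, since it is consistent with the modelling already adopted in Propositions \ref{proposition_CDF++}--\ref{proposition_CDF_S_n}; the heavy lifting was done in Proposition \ref{proposition_CDF_S_n}, and Proposition \ref{proposition_CDF_Max-Sum} is essentially a one-line corollary of it combined with the order-statistic identity for independent maxima.
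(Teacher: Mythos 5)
Your proposal is correct and follows essentially the same route as the paper: write $\gamma_{\textit{Max-Sum}}=\max_{n}S_n$, invoke independence of the $S_n$ across RISs to factorise the CDF as $\prod_{n=1}^{N}F_{S_n}(\gamma)$, and substitute the series expression from Proposition \ref{proposition_CDF_S_n}. Your added remark justifying the independence via the disjoint fading channels of distinct RISs is a sensible elaboration of what the paper simply asserts, but it does not change the argument.
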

\begin{proof}
From \eqref{max-sum_eq2}, $\gamma_{\textit{Max-Sum}}$ is the maximum over $N$ independent RVs $S_n$. Hence, by using the order statistics result, the CDF $F_{\gamma_{\textit{Max-Sum}}}$ can be calculated as
\begin{equation}\label{444}
    F_{\gamma_{\textit{Max-Sum}}}(\gamma) = \mathbb{P} \left( \max_{n=1,\dots,N} S_n \le \gamma \right).
\end{equation}
Since the RVs $S_n$ are independent, 
\begin{equation}\label{rrr}
    F_{\gamma_{\textit{Max-Sum}}}(\gamma) = \prod_{n=1}^{N} F_{S_n}(\gamma).
\end{equation}
Substituting \eqref{S_n} into \eqref{rrr}, we obtain Proposition \ref{proposition_CDF_Max-Sum}, and the proof ends.
\end{proof}

\begin{corollary}\label{corollary4}
For identical RVs (i.e., $ L_{n,1} = L_{1}$, $L_{n,2} = L_{2}$, $a_{k,n} = a$, and $b_{k,n} = b$), the expression in Proposition \ref{proposition_CDF_Max-Sum} simplifies to
\begin{equation}\label{max-sum-iid}
    F_{\gamma_{\textit{Max-Sum}}}(\gamma) = \left(\sum_{j = 0}^{\infty} \frac{\left(-1\right)^{j} \left(K\frac{L^{2}_{1}L^{2}_{2}}{b^2} \frac{\gamma}{\bar{\gamma}}\right)^{\frac{Ka + j}{2}} }{j! \left(Ka + j\right) \Gamma(Ka)}\right)^N.
\end{equation}
\end{corollary}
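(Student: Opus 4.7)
The plan is a direct specialization of Proposition \ref{proposition_CDF_Max-Sum}. First I would take the general expression
\[
F_{\gamma_{\textit{Max-Sum}}}(\gamma) = \prod_{n=1}^{N} \sum_{j = 0}^{\infty} \frac{(-1)^{j}\left(\sum_{k=1}^{K}\frac{L^{2}_{n,1}L^{2}_{n,2}}{b^{2}_{k,n}}\frac{\gamma}{\bar{\gamma}}\right)^{\frac{\sum_{k=1}^{K} a_{k,n}+j}{2}}}{j!\left(\sum_{k=1}^{K} a_{k,n} + j\right)\Gamma \!\left(\sum_{k=1}^{K} a_{k,n}\right)}
\]
and impose the identity assumptions $L_{n,1} = L_{1}$, $L_{n,2} = L_{2}$, $a_{k,n} = a$, and $b_{k,n} = b$ for every $n$ and $k$.

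Next I would carry out the two elementary reductions that these assumptions induce inside the inner summand. The sum over $k$ of the scalar prefactor collapses to $\sum_{k=1}^{K}\frac{L^{2}_{n,1}L^{2}_{n,2}}{b^{2}_{k,n}}\frac{\gamma}{\bar{\gamma}} = K\,\frac{L^{2}_{1}L^{2}_{2}}{b^{2}}\,\frac{\gamma}{\bar{\gamma}}$, while the shape-parameter sum collapses to $\sum_{k=1}^{K} a_{k,n} = Ka$. With these substitutions the entire summand becomes independent of the index $n$, so each of the $N$ factors in the outer product is identical.

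Finally, because the product of $N$ identical factors equals the $N$-th power of any single factor, I would conclude that
\[
F_{\gamma_{\textit{Max-Sum}}}(\gamma) = \left(\sum_{j=0}^{\infty} \frac{(-1)^{j}\left(K\,\frac{L^{2}_{1}L^{2}_{2}}{b^{2}}\,\frac{\gamma}{\bar{\gamma}}\right)^{\frac{Ka+j}{2}}}{j!\,(Ka+j)\,\Gamma(Ka)}\right)^{N},
\]
which is exactly \eqref{max-sum-iid}. Since the argument is purely a substitution followed by recognizing that an $n$-independent product is a power, there is no real obstacle here; the only point that deserves a brief sanity check is that the implicit independence of the $S_{n}$ used in Proposition \ref{proposition_CDF_Max-Sum} is preserved under the identical-parameter assumption, which it is, as the RISs themselves are still distinct physical scatterers and only their statistical parameters are being equated.
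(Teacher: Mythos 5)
Your proposal is correct and matches the paper's (implicit) reasoning: Corollary \ref{corollary4} is stated as an immediate specialization of Proposition \ref{proposition_CDF_Max-Sum}, obtained exactly as you describe by substituting the identical parameters so that $\sum_{k=1}^{K}\frac{L^{2}_{1}L^{2}_{2}}{b^{2}}\frac{\gamma}{\bar{\gamma}} = K\frac{L^{2}_{1}L^{2}_{2}}{b^{2}}\frac{\gamma}{\bar{\gamma}}$ and $\sum_{k=1}^{K}a_{k,n} = Ka$, after which the $n$-independent product collapses to an $N$-th power. No gaps; your closing remark about independence of the $S_n$ being unaffected by equating parameters is a correct and reasonable sanity check.
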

\subsection{OP Analysis}
The OP is a key performance metric in wireless 
systems. It represents the probability that the instantaneous SNR falls below a predefined threshold, denoted as $\gamma_{\text{th}}$. This event occurs when the communication link cannot meet the desired quality of service. Mathematically, the OP is defined as
\begin{equation}\label{OP1}
	P_{\text{out}} = \mathbb{P}(\gamma < \gamma_{\text{th}}).
\end{equation}
In Proposition \ref{eq:OP}, we provide the OP expressions for both selection schemes.
\begin{proposition}\label{eq:OP}
	The OP for the \textit{Max-Max} and \textit{Max-Sum} selection schemes are given, respectively by
	\begin{align} \label{eq:OP_max_max}
		P^{\textit{Max-Max}}_{out}&=F_{\gamma_{\textit{Max-Max}}}(\gamma_{\text{th}}),
	\end{align}
and
	\begin{align} \label{eq:OP_max_sum}
		P^{\textit{Max-Sum}}_{out} &=F_{\gamma_{\textit{Max-Sum}}}(\gamma_{\text{th}}).
	\end{align}
\end{proposition}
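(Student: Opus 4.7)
The plan is to argue this directly from the definition of the outage probability in \eqref{OP1} together with the CDF expressions already established in Propositions \ref{proposition_CDF_Max-Max} and \ref{proposition_CDF_Max-Sum}. Since the received SNR under each selection scheme is an absolutely continuous random variable (it is a smooth function of continuous Nakagami-$m$ envelopes), the event $\{\gamma < \gamma_{\text{th}}\}$ and the event $\{\gamma \le \gamma_{\text{th}}\}$ have equal probability, so the OP coincides with the value of the CDF evaluated at the threshold.

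First, for the \textit{Max-Max} scheme I would substitute $\gamma \leftarrow \gamma_{\textit{Max-Max}}$ into \eqref{OP1} to write $P^{\textit{Max-Max}}_{out} = \mathbb{P}(\gamma_{\textit{Max-Max}} < \gamma_{\text{th}})$, invoke the continuity of the distribution to replace the strict inequality by a non-strict one, and then identify the right-hand side as $F_{\gamma_{\textit{Max-Max}}}(\gamma_{\text{th}})$. The explicit form then follows by simply plugging $\gamma = \gamma_{\text{th}}$ into the expression in Proposition \ref{proposition_CDF_Max-Max}, which encodes the copula-based joint CDF of the correlated port SNRs together with the series representation for each marginal.

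Next, I would repeat the same three-line argument for the \textit{Max-Sum} scheme with $\gamma \leftarrow \gamma_{\textit{Max-Sum}}$, obtaining $P^{\textit{Max-Sum}}_{out} = F_{\gamma_{\textit{Max-Sum}}}(\gamma_{\text{th}})$, where the closed form is read off from Proposition \ref{proposition_CDF_Max-Sum}. Here the only point worth mentioning is that the Gamma-sum approximation for $S_n$ and the subsequent order-statistic step used in deriving that CDF are already baked in, so no additional approximation is introduced at this stage.

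The proof is essentially a one-line substitution and contains no real obstacle; the analytical heavy lifting was done in the preceding propositions. The only subtlety to flag is the strict-versus-non-strict inequality in \eqref{OP1} vs.\ the CDF convention, which I would dispose of in a single sentence by noting the continuity of the SNR distribution under Nakagami-$m$ fading.
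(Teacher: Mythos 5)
Your argument is correct and follows essentially the same route as the paper: substitute the scheme's SNR into the outage definition, identify the probability with the CDF at $\gamma_{\text{th}}$, and read off the explicit forms from Propositions \ref{proposition_CDF_Max-Max} and \ref{proposition_CDF_Max-Sum}. Your remark on the strict versus non-strict inequality (handled via continuity of the SNR distribution) is a minor refinement the paper leaves implicit, but it does not change the approach.
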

\begin{proof}
From \eqref{OP1}, $P^{\textit{Max-Max}}_{out}$ and $P^{\textit{Max-Sum}}_{out}$ can be calculated, respectively as
\begin{align} \label{eq:OP_max_max1}
	P^{\textit{Max-Max}}_{out}&=\mathbb{P}(\gamma_{\textit{Max-Max}} < \gamma_{\text{th}})\nonumber\\
	&=	F_{\gamma_{\textit{Max-Max}}}(\gamma_{\text{th}}),
\end{align}
and
	\begin{align} \label{eq:OP_max_sum1}
	P^{\textit{Max-Sum}}_{out} &=\mathbb{P}(\gamma_{\textit{Max-Sum}} < \gamma_{\text{th}})\nonumber\\
	&=	F_{\gamma_{\textit{Max-Sum}}}(\gamma_{\text{th}}).
\end{align}
By substituting Propositions \ref{proposition_CDF_Max-Max} and \ref{proposition_CDF_Max-Sum} into \eqref{eq:OP_max_max1} and \eqref{eq:OP_max_sum1}, respectively, we obtain Proposition \ref{eq:OP}, and the proof ends.
	\end{proof}

\subsection{Asymptotic OP Analysis}
In wireless communication systems, it is crucial to analyze the OP's behavior as the SNR approaches infinity. Accordingly, in this subsection, we conduct the asymptotic OP analysis. This analysis provides key insights into system performance under high-SNR conditions in terms of the expressions of the diversity gain $(G_{ d})$ and coding gain $(G_{ c})$ achieved by the system.
In Proposition \ref{eq:asymOP}, we present the asymptotic OP expressions for both selection schemes.

\begin{proposition}\label{eq:asymOP}
	In the high SNR regime, the asymptotic OP, $P^{\text{asym}}_{\text{out}}$, for the \textit{Max-Max} and \textit{Max-Sum} selection schemes can be expressed as 
	\begin{align} \label{eq:CDF223}
		P^{\text{asym}}_{\text{out}}&=(G_{c}\bar{\gamma})^{-G_{ d}}.
	\end{align}
	Here the values of $G_{ d}$ and $G_{ c}$ are given for the \textit{Max-Max} and \textit{Max-Sum} selection schemes in Table \ref{values}.
	\begin{table}[!t]
		\caption{Values of \( G_d \) and \( G_c \) for \textit{Max-Max} and \textit{Max-Sum} Selection Schemes.}
		\label{values}
		\centering
		\begin{tabular}{|c|c|c|}
			\hline
			\textbf{Selection Scheme} & \( G_d \) & \( G_c \) \\ \hline
			\textit{Max-Max}                  & \( \frac{KN a}{2} \)        & \( \frac{\left(a \Gamma(a)\right)^{\frac{2}{a}} b^2}{L^{2}_{1}L^{2}_{2} \gamma_{\text{th}}} \) \\ \hline
			\textit{Max-Sum}                  & \( \frac{KN a}{2} \)        & \( \frac{\left(K a \Gamma(K a)\right)^{\frac{2}{Ka}} b^2}{K L^{2}_{1}L^{2}_{2} \gamma_{\text{th}}} \) \\ \hline
		\end{tabular}
	\end{table}
\end{proposition}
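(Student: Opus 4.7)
The plan is to start from the two closed-form CDF expressions obtained under the identical-parameter simplifications, namely \eqref{eq:CDF2+iidd} for $F_{\gamma_{\textit{Max-Max}}}$ and \eqref{max-sum-iid} for $F_{\gamma_{\textit{Max-Sum}}}$, substitute $\gamma=\gamma_{\text{th}}$ to get the OP via Proposition \ref{eq:OP}, and then expand each in the limit $\bar{\gamma}\to\infty$. Since the argument of the series is $\gamma_{\text{th}}/\bar{\gamma}\to 0$, the inner series has the form $\sum_{j\geq 0} c_j\,x^{(\alpha+j)/2}$ with $x\to 0$; the leading behavior is given entirely by the $j=0$ term, and all $j\geq 1$ terms are of strictly higher order in $1/\bar{\gamma}$ and can be discarded. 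This reduces each CDF to a pure power of $\bar{\gamma}^{-1}$, which is exactly what is needed to read off $G_d$ and $G_c$ from the canonical form $(G_c\bar{\gamma})^{-G_d}$.

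For the \textit{Max-Max} scheme, retaining only $j=0$ in \eqref{eq:CDF2+iidd} yields
\begin{equation}
P^{\textit{Max-Max}}_{\text{out}}\approx \left(\frac{1}{a\,\Gamma(a)}\left(\frac{L_1^2 L_2^2\,\gamma_{\text{th}}}{b^2\,\bar{\gamma}}\right)^{a/2}\right)^{KN}
=\frac{1}{\bigl(a\Gamma(a)\bigr)^{KN}}\left(\frac{L_1^2 L_2^2\,\gamma_{\text{th}}}{b^2\,\bar{\gamma}}\right)^{KNa/2}.
\end{equation}
Matching this to $(G_c\bar{\gamma})^{-G_d}$ forces $G_d=KNa/2$ and $G_c=(a\Gamma(a))^{2/a}\,b^2/(L_1^2 L_2^2\,\gamma_{\text{th}})$, which is the first row of Table \ref{values}. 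For the \textit{Max-Sum} scheme, the same truncation of \eqref{max-sum-iid} gives
\begin{equation}
P^{\textit{Max-Sum}}_{\text{out}}\approx\frac{1}{\bigl(Ka\,\Gamma(Ka)\bigr)^{N}}\left(\frac{K L_1^2 L_2^2\,\gamma_{\text{th}}}{b^2\,\bar{\gamma}}\right)^{NKa/2},
\end{equation}
and an identical matching step produces $G_d=KNa/2$ and $G_c=(Ka\Gamma(Ka))^{2/(Ka)}\,b^2/(K L_1^2 L_2^2\,\gamma_{\text{th}})$, i.e.\ the second row of the table. Note that both schemes share the same diversity order; only the coding gain differs, and the ratio of coding gains will explain the SNR offset observed in the numerical section.

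The main obstacle is justifying rigorously that the truncation of the Laguerre/incomplete-Gamma series at $j=0$ indeed captures the first-order asymptotics after raising to the $N$-th (or $KN$-th) power. The cleanest way is to write each inner sum as $c_0 x^{a/2}\bigl(1+O(x)\bigr)$ with $x=L_1^2 L_2^2\gamma_{\text{th}}/(b^2\bar{\gamma})$, so that the $N$-th power equals $c_0^{N} x^{Na/2}\bigl(1+O(x)\bigr)$ uniformly for small $x$; the $O(x)$ remainder contributes only sub-leading terms in $1/\bar{\gamma}$ and vanishes after taking $\bar{\gamma}\to\infty$ in the ratio $P_{\text{out}}/(G_c\bar{\gamma})^{-G_d}$. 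A secondary, more cosmetic obstacle is the algebra of absorbing the constant prefactor $(a\Gamma(a))^{KN}$ (respectively $(Ka\Gamma(Ka))^{N}$) into $G_c$ raised to the power $G_d$; this is where the exponents $2/a$ and $2/(Ka)$ in the table arise, and the dependence on $L_1,L_2,b,\gamma_{\text{th}}$ follows by inspection once the constant is rewritten as $G_c^{G_d}$.
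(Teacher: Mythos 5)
Your proposal is correct and follows essentially the same route as the paper: the paper's step of applying the small-argument approximation $\Upsilon(s,x)\simeq x^{s}/s$ to the incomplete-Gamma form of \eqref{eq:CDF2+iidd} and \eqref{max-sum-iid} is exactly your retention of the $j=0$ term of the series, after which both arguments absorb the constants $(a\Gamma(a))^{KN}$ and $(Ka\Gamma(Ka))^{N}$ into $G_c$ and read off $G_d$ from the exponent of $\bar{\gamma}$ at $\gamma=\gamma_{\text{th}}$. Your added remark on the uniform $O(x)$ control of the discarded $j\geq 1$ terms is a minor tightening of the same argument, not a different approach.
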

\begin{proof}
	It is worth mentioning that when $x \to 0$, the lower incomplete Gamma function, $\Upsilon(s,x)$, can be approximated as
	\begin{align}\label{Asym1}
		\Upsilon(s,x) \simeq \frac{x^s}{s}.
	\end{align}
	From \eqref{eq:CDF2+iidd}, at high SNR $\left(\frac{\gamma}{\bar{\gamma}} \to 0\right)$, the argument of the lower incomplete Gamma function becomes
	\begin{align}\label{Asym2}
		x = \sqrt{\frac{L^{2}_{1}L^{2}_{2}}{b^2} \frac{\gamma}{\bar{\gamma}}}.
	\end{align}
	Using \eqref{Asym1} and \eqref{Asym2} in \eqref{eq:CDF2+iidd}, then the asymptotic $F_{\gamma_{\textit{Max-Max}}}(\gamma)$, $F^{\text{asym}}_{\gamma_{\text{Max-Max}}}(\gamma)$ can be written as
	\begin{align}\label{Asym4}
		F^{\text{asym}}_{\gamma_{\textit{Max-Max}}}(\gamma) &=\left(\frac{\left(\sqrt{\frac{L^{2}_{1}L^{2}_{2}}{b^2} \frac{\gamma}{\bar{\gamma}}}\right)^{a}}{a\Gamma(a)}\right)^{KN}\nonumber\\
		&=\frac{\left(\frac{L^{2}_{1}L^{2}_{2}}{b^2} \right)^{\frac{KNa}{2}}\gamma^{\frac{KNa}{2}}}{\left(a\Gamma(a)\right)^{KN}}\left(\bar{\gamma}\right)^{-\frac{KNa}{2}}\nonumber\\
		&=\left(\frac{\left(a\Gamma(a)\right)^{\frac{2}{a}}b^2}{L^{2}_{1}L^{2}_{2} \gamma}\bar{\gamma}\right)^{-\frac{KNa}{2}}.
	\end{align}
By substituting $\gamma_{\text{th}}$ into \eqref{Asym4}, we can obtain the values of $G_{d}$ and $G_{c}$ for the \textit{Max-Max} selection scheme as in Table \ref{values}. Likewise, for the \textit{Max-Sum} selection scheme, by using \eqref{Asym1} and \eqref{Asym2} in \eqref{max-sum-iid}, then the asymptotic $F_{\gamma_{\textit{Max-Sum}}}(\gamma)$, 
$F^{\text{asym}}_{\gamma_{\textit{Max-Sum}}}(\gamma)$ can be expressed as
	\begin{align}\label{Asym3}
		F^{\text{asym}}_{\gamma_{\textit{Max-Sum}}}(\gamma) &=\left(\frac{\left(\sqrt{K\frac{L^{2}_{1}L^{2}_{2}}{b^2} \frac{\gamma}{\bar{\gamma}}}\right)^{Ka}}{Ka\Gamma(Ka)}\right)^{N}\nonumber\\
		&=\frac{\left(K\frac{L^{2}_{1}L^{2}_{2}}{b^2} \right)^{\frac{KNa}{2}}\gamma^{\frac{KNa}{2}}}{\left(Ka\Gamma(Ka)\right)^{N}}\left(\bar{\gamma}\right)^{-\frac{KNa}{2}}\nonumber\\
		&=\left(\frac{\left(Ka\Gamma(Ka)\right)^{\frac{2}{Ka}}b^2}{K L^{2}_{1}L^{2}_{2} \gamma}\bar{\gamma}\right)^{-\frac{KNa}{2}}.
	\end{align}
	By substituting $\gamma_{\text{th}}$ into \eqref{Asym3}, we can obtain the values of $G_{d}$ and $G_{c}$ for the \textit{Max-Sum} selection scheme as in Table \ref{values}, and the proof ends.
\end{proof}
\begin{remark}
From Table \ref{values}, the values of $G_d$ are identical for both schemes, while the expressions for $G_c$ differ. The $G_c $ for the \textit{Max-Sum} selection scheme depends on the number of ports $K$, whereas the \textit{Max-Max} selection scheme does not. This difference arises because, in the \textit{Max-Sum} selection scheme, all $K$ ports are activated, whereas only one port is activated in the \textit{Max-Max} selection scheme.
\end{remark}

\subsection{DOR Analysis}
The DOR is characterized as the probability that the transmission delay for a given data volume, $R$, over a wireless channel with bandwidth $B$ exceeds a specified threshold $T_\text{th}$.
%
Mathematically, this is expressed as
\begin{equation}\label{dor1}
 P_{dor} = \mathbb{P}(T_\text{dt} > T_\text{th}), 
\end{equation}
where $T_\text{dt}$ denotes the delivery time and can be written as
\begin{equation}\label{dor2}
 T_\text{dt} = \frac{R}{B \log_2(1 + \gamma)}. 
\end{equation}
The DOR corresponding to the proposed system model is provided in Proposition \ref{eq:dor}.
\begin{proposition}\label{eq:dor}
	The DORs for the \textit{Max-Max} and \textit{Max-Sum} selection schemes are given, respectively, by
	\begin{align} \label{eq:dor_max_max}
		P^{\textit{Max-Max}}_{dor}&=F_{\gamma_{\textit{Max-Max}}}\left(2^{\frac{R}{B T_\text{th}}} - 1\right),
	\end{align}
and
	\begin{align} \label{eq:dor_max_sum}
		P^{\textit{Max-Sum}}_{dor} &=F_{\gamma_{\textit{Max-Sum}}}\left(2^{\frac{R}{B T_\text{th}}} - 1\right).
	\end{align}
\end{proposition}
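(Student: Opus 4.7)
The plan is to reduce the DOR computation to a CDF evaluation at a specific threshold, in direct analogy with the OP result in Proposition \ref{eq:OP}. The starting point is the definition in \eqref{dor1} together with the expression for the delivery time in \eqref{dor2}, after which the monotonicity of $\log_2(1+\gamma)$ in $\gamma$ is used to rewrite the event $\{T_{\text{dt}} > T_{\text{th}}\}$ as an event on $\gamma$ alone.

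Concretely, first I would substitute \eqref{dor2} into \eqref{dor1} to get
\begin{align}
P_{dor} = \mathbb{P}\!\left( \frac{R}{B\log_2(1+\gamma)} > T_{\text{th}} \right).
\end{align}
Since $B$, $R$, $T_{\text{th}}$ are positive and $\log_2(1+\gamma)$ is strictly increasing in $\gamma\ge 0$, the inequality can be inverted and solved for $\gamma$, yielding the equivalent event $\{\gamma < 2^{R/(BT_{\text{th}})}-1\}$. Therefore
\begin{align}
P_{dor} = \mathbb{P}\!\left( \gamma < 2^{\frac{R}{BT_{\text{th}}}}-1 \right) = F_{\gamma}\!\left(2^{\frac{R}{BT_{\text{th}}}}-1\right),
\end{align}
which is simply the CDF of $\gamma$ evaluated at the transformed delay threshold, mirroring the OP structure in \eqref{OP1}.

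Finally, I would specialize this generic identity to each selection scheme: for the \textit{Max-Max} scheme, setting $\gamma = \gamma_{\textit{Max-Max}}$ and invoking Proposition \ref{proposition_CDF_Max-Max} yields \eqref{eq:dor_max_max}; for the \textit{Max-Sum} scheme, setting $\gamma = \gamma_{\textit{Max-Sum}}$ and invoking Proposition \ref{proposition_CDF_Max-Sum} yields \eqref{eq:dor_max_sum}. There is no real obstacle here: the derivation is essentially a change of variable inside a probability, and the heavy lifting (namely, obtaining usable closed-form CDFs for $\gamma_{\textit{Max-Max}}$ and $\gamma_{\textit{Max-Sum}}$) has already been completed in Propositions \ref{proposition_CDF_Max-Max} and \ref{proposition_CDF_Max-Sum}. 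The only point requiring explicit justification is the direction of the inequality when inverting $T_{\text{dt}}$, which follows immediately from the strict monotonicity of $\log_2(1+\gamma)$.
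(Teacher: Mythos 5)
Your proposal is correct and follows essentially the same route as the paper: substitute the delivery-time expression \eqref{dor2} into \eqref{dor1}, invert the inequality to obtain the event $\{\gamma_i < 2^{R/(BT_\text{th})}-1\}$, and then read off the result as the CDFs from Propositions \ref{proposition_CDF_Max-Max} and \ref{proposition_CDF_Max-Sum} evaluated at the transformed threshold. Your explicit appeal to the positivity of $R$, $B$, $T_\text{th}$ and the strict monotonicity of $\log_2(1+\gamma)$ just makes precise the inequality manipulations the paper carries out step by step.
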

\begin{proof}
For $i \in \left\{\textit{Max-Max}, \textit{Max-Sum}\right\}$, the substitution of \eqref{dor2} into \eqref{dor1} yields
\begin{align}\label{dor3}
    P_{dor} &= \mathbb{P}\left( \frac{R}{B \log_2(1 + \gamma_{i})} > T_\text{th}\right)\nonumber\\
		&=\mathbb{P}\left( \frac{R}{B T_\text{th}} > \log_2(1 + \gamma_{i})\right)\nonumber\\
		&=\mathbb{P}\left( 2^{\frac{R}{B T_\text{th}}} > 1 + \gamma_{i}\right)\nonumber\\
		&=\mathbb{P}\left( \gamma_{i} < 2^{\frac{R}{B T_\text{th}}} - 1\right)\nonumber\\
		&=F_{\gamma_{i}}\left(2^{\frac{R}{B T_\text{th}}} - 1\right),
\end{align}
By substituting Propositions \ref{proposition_CDF_Max-Max} and \ref{proposition_CDF_Max-Sum} into \eqref{dor3}, we obtain Proposition \ref{eq:dor}, and the proof ends.
\end{proof}



\begin{figure}[t]
\centering
{\includegraphics[width=8.8cm,height=5.55cm]{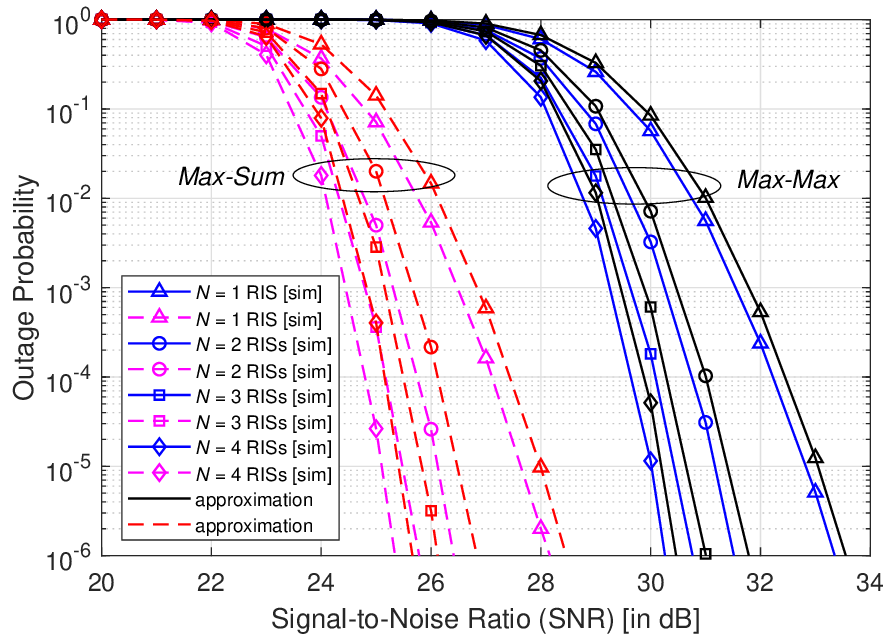}}\\
\caption{OP versus SNR with varying the number of RISs, $N$, when $M$ = 16 elements, $m$ = 1, $K$ = 4 ports, and $A = 1\lambda^2$.}
\label{Figure2}
\end{figure}



\section{Numerical Analysis} \label{Numerical Analysis}

In this section, the performance of the proposed multi-RIS-assisted FAS is evaluated in terms of OP and DOR by varying key parameters including the number of RISs, $N$, and their reflecting elements, $M$, number of ports, $K$, antenna size, $\lambda d_{1} \times \lambda d_{2}$, the shape parameter, $m_{k,n,i}$, the amount of data, $R$, and the bandwidth, $B$. In addition, we set the following simulation parameters, $\lambda  = 1$ m, $\gamma_\text{th} = 0$ dB, $T_\text{th} = 3$ ms $L_{n,1} = L_{n,2} = 20$ m, $\Omega = \Omega_{k,n,i} = 1$ for all figures. 


Fig. \ref{Figure2} shows the OP performance of multi-RIS-assisted FAS versus SNR. In this figure, the OP performance is plotted for various numbers of RISs $(N\in\{1,2,3,4\})$ and equal Nakagami parameters $(m = m_{k,n,i}=1)$, which is a special case of Rayleigh fading. Additionally, the proposed \textit{Max-Max} and \textit{Max-Sum} selection schemes are compared, and the single-RIS-assisted FAS with $N= 1$ is also evaluated for the sake of fair comparison. As seen in Fig. \ref{Figure2}, the approximated analytical results are validated by the simulations and are near the exact ones. Moreover, the OP performance improves as the number of RISs increases. For example, to achieve an OP of $10^{-5}$ under \textit{Max-Max} selection scheme, deploying multiple RISs yields SNR gains of approximately $2$ dB, $2.5$ dB and $3$ dB for $N = 2, 3$ and $4$, respectively, compared to the single-RIS case. In addition, the \textit{Max-Sum} selection scheme outperforms the \textit{Max-Max} selection scheme because it exploits more spatial paths. For instance, in the case of $N = 4$, to achieve an OP of $10^{-5}$, there is about a $5$ dB SNR gain for the \textit{Max-Sum} selection scheme over the \textit{Max-Max} scheme. 
Moreover, the same SNR gain of about $5$ dB is achieved in the former scheme over the latter scheme regardless of the number of RISs.

Figs. \ref{Figure3} and \ref{Figure4} illustrate the OP performance versus SNR for different numbers of FAS ports, respectively, with a fixed FA size $(A = 1\lambda^2)$ and variable FA sizes $(A\in\{1\lambda^2, 2\lambda^2, 3\lambda^2, 4\lambda^2\})$\footnote{In this paper, the cases, $K = 2 \times 2$, $2 \times 3$, $2 \times 4$, $2 \times 5$, and $4 \times 4$ with $A = 1\lambda \times 1\lambda$, $1\lambda \times 2\lambda$, $1\lambda \times 2\lambda$, $1\lambda \times 3\lambda$, and $2\lambda \times 2\lambda$, are considered.}. 
%
%
It is evident from Fig. \ref{Figure3} that both \textit{Max-Max} and \textit{Max-Sum} selection schemes exhibit identical performance for the case of a single port, i.e., $K = 1$, as the single port will always be selected in both schemes. 
It is also clearly observed from Figs. \ref{Figure3} and \ref{Figure4} that the OP performance enhances, respectively, as the number of ports and FA size increase. This is mainly because increasing the number of ports introduces a stronger spatial correlation.
For example, from Fig. \ref{Figure3}, to achieve an OP of $10^{-4}$ for the proposed \textit{Max-Max} selection scheme, there are about $2.7$ dB, $3.7$ dB, $4.3$ dB, and $4.7$ dB SNR gains for $K = 4, 9, 16$ and $25$ over the case with $K = 1$, respectively. On the other hand, to achieve an OP of $10^{-4}$ for the \textit{Max-Sum} selection scheme, there are almost $7.3$ dB, $11.2$ dB, $14$ dB, and $16.1$ dB SNR gains for $K = 4, 9, 16$ and $25$ over the case with $K = 1$, respectively. 
Similarly, Fig. \ref{Figure4} shows that increasing the FA size results in increased spatial correlation and thereby improved OP performance.
These two figures show that the proposed \textit{Max-Sum} selection scheme achieves higher SNR gains as the number of ports and the size of FA increase, as it exploits more spatial paths, thereby enhancing the signal diversity and overall system performance.

\begin{figure}[t]
\centering
{\includegraphics[width=8.5cm,height=5.35cm]{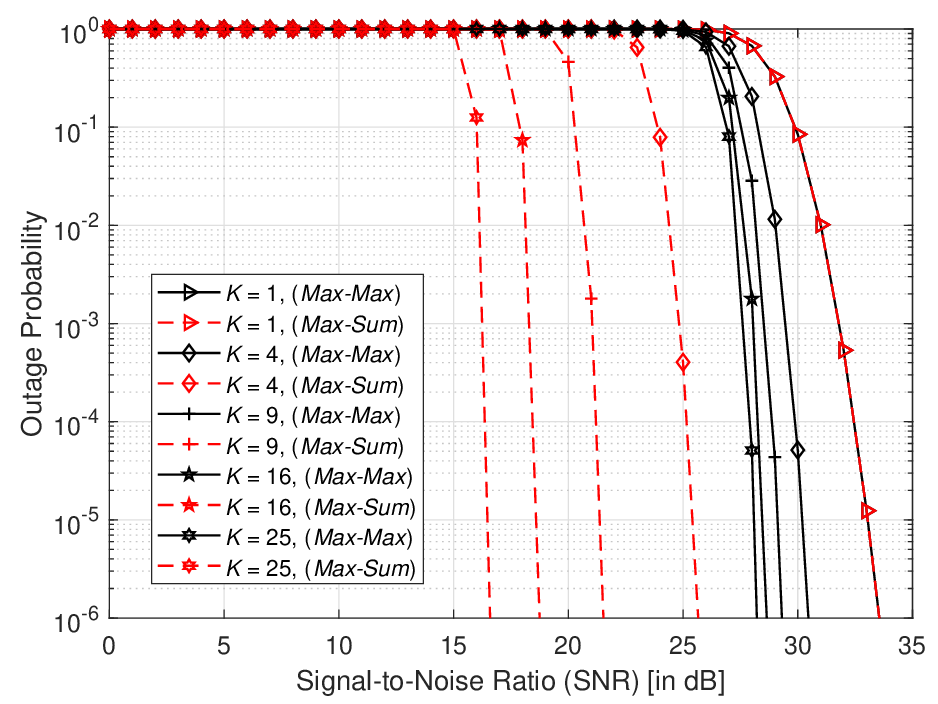}}
\caption{OP versus SNR with varying the number of ports, $K$, when $N$ = 4 RISs, $M$ = 16 elements, $m$ = 1, and $A = 1\lambda^2$.}
\label{Figure3}
\end{figure}

\begin{figure}[t!]
\centering
{\includegraphics[width=8.5cm,height=5.35cm]{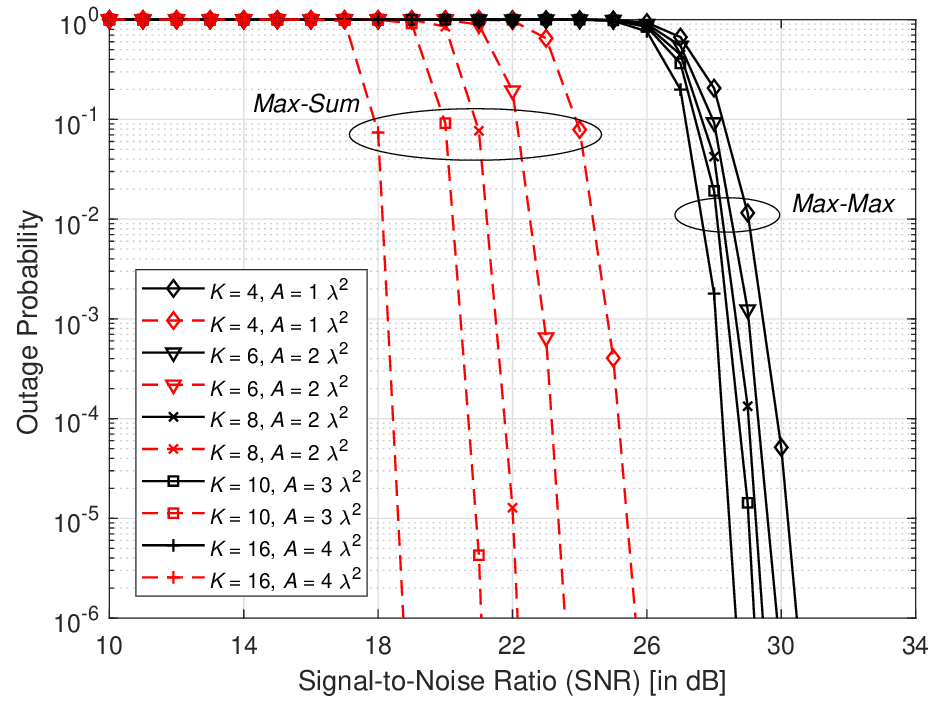}}
\caption{OP versus SNR with varying the FA size, ${\it{A}} = d_{1} \lambda \times d_{2} \lambda$, and varying the number of ports, $K$, when $M$ = 16 elements, $m$ = 1, and $N$ = 4 RISs.}
\label{Figure4}
\end{figure}


The OP performance of multi-RIS-assisted FAS for both \textit{Max-Max} and \textit{Max-Sum} selection schemes is presented in Fig. \ref{Figure5} for various numbers of reflecting RIS elements $(M\in\{8, 16, 32, 64\})$. It is observed from this figure that the OP performance gets better as the number of RIS elements increases. For instance, to achieve an OP of $10^{-5}$ for both selection schemes, there is approximately a $5$ dB SNR gain by doubling the number of reflecting RIS elements.

In order to investigate the impact of the channel conditions on the OP performance of both 
\textit{Max-Max} and \textit{Max-Sum} 
selection schemes, OP versus SNR is evaluated in Fig. \ref{Figure6} with different fading parameters, $(m=m_{k,n,i}\in\{1, 2, 3\})$. This comparison shows that the OP performance improves as the channel condition improves, i.e., the values of $m$ and $m_{k,n,i}$ increase.

\begin{figure}[t]
\centering
{\includegraphics[width=8.5cm,height=5.4cm]{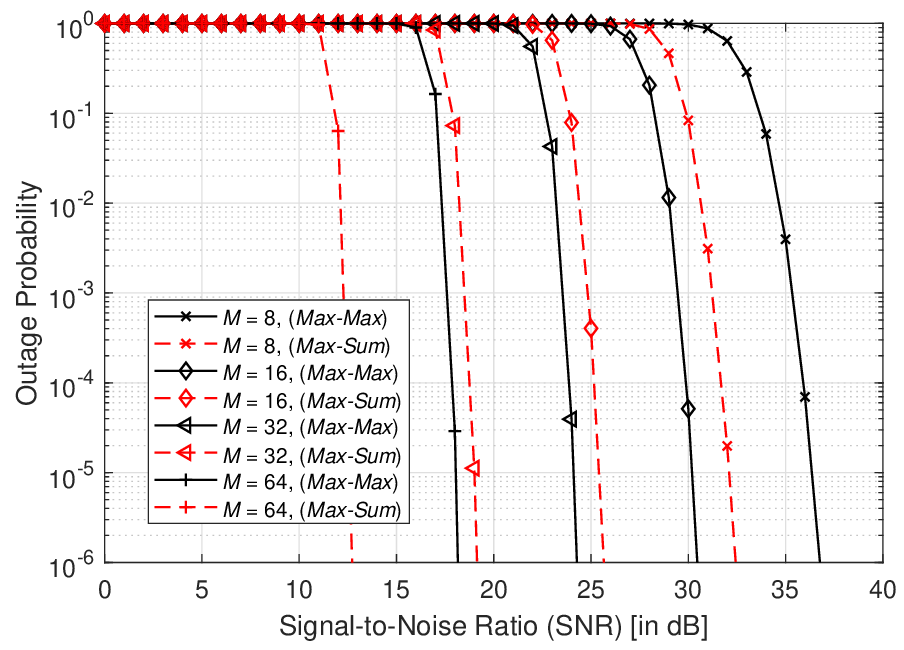}}
\caption{OP versus SNR with varying the number of reflecting RIS elements, $M$, when $K$ = 4 ports, $m$ = 1, $A = 1\lambda^2$, and $N$ = 4 RISs.}
\label{Figure5}
\end{figure}

\begin{figure}[t!]
\centering
{\includegraphics[width=8.5cm,height=5.4cm]{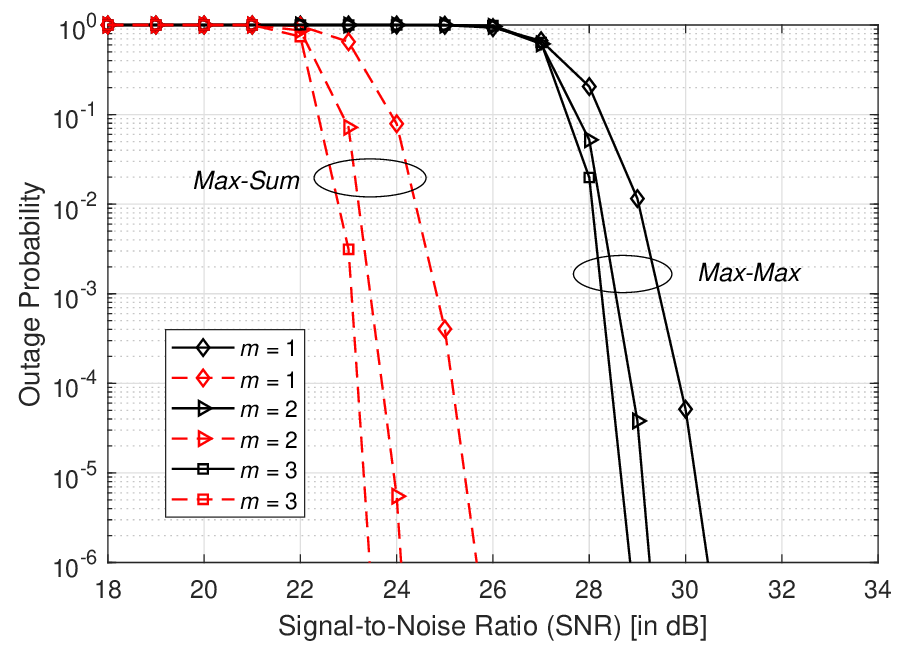}}
\caption{OP versus SNR with varying $m$, when $M$ = 16 elements, $K$ = 4 ports, $m$ = 1, $A = 1\lambda^2$, and $N$ = 4 RISs.}
\label{Figure6}
\end{figure}

Finally, Figs. \ref{Figure7a} and \ref{Figure7b} illustrate the DOR performance versus SNR when varying the amount of data while keeping the bandwidth fixed, and when varying the bandwidth with keeping the amount of data fixed, respectively. It is clear from Fig. \ref{Figure7a} that DOR performance degrades as the amount of transmitted data, $R$, increases under a fixed bandwidth of $B =$ 1 MHz due to the longer transmission time required. This makes it challenging to meet the latency constraint with high data rates, resulting in a significant increase in the delay outage. In addition, Fig. \ref{Figure7b} illustrates that DOR performance improves as the channel bandwidth increases. This shows that a preset amount of data can be transmitted with a reduced delay as the bandwidth increases.

\begin{figure*}[t]
\centering
\subfloat[\label{Figure7a}]{%
\includegraphics[width=8.5cm,height=5.5cm]{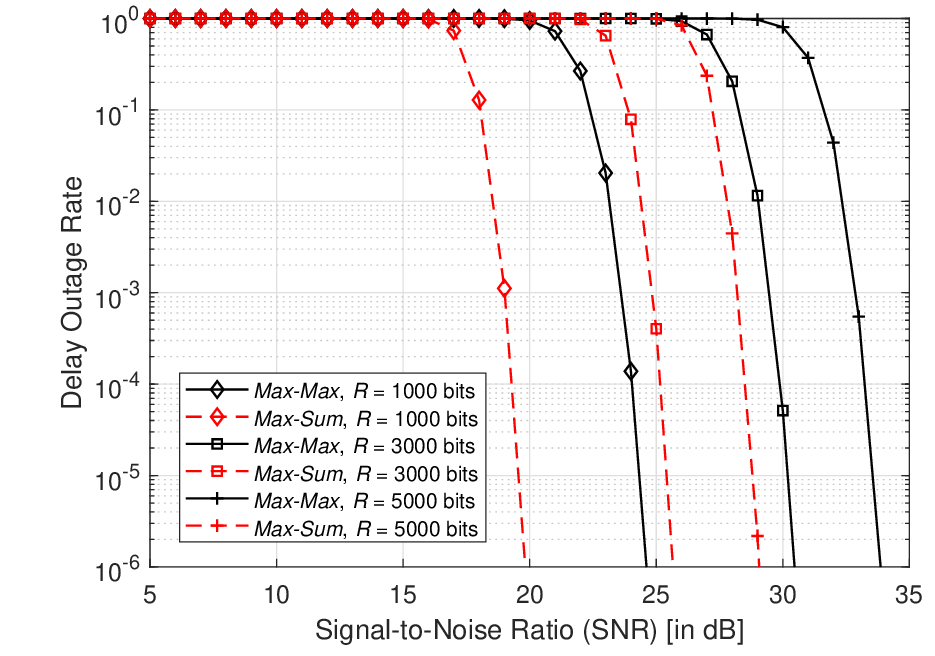}}
\hspace{\fill}
\subfloat[\label{Figure7b}]{%
\includegraphics[width=8.5cm,height=5.5cm]{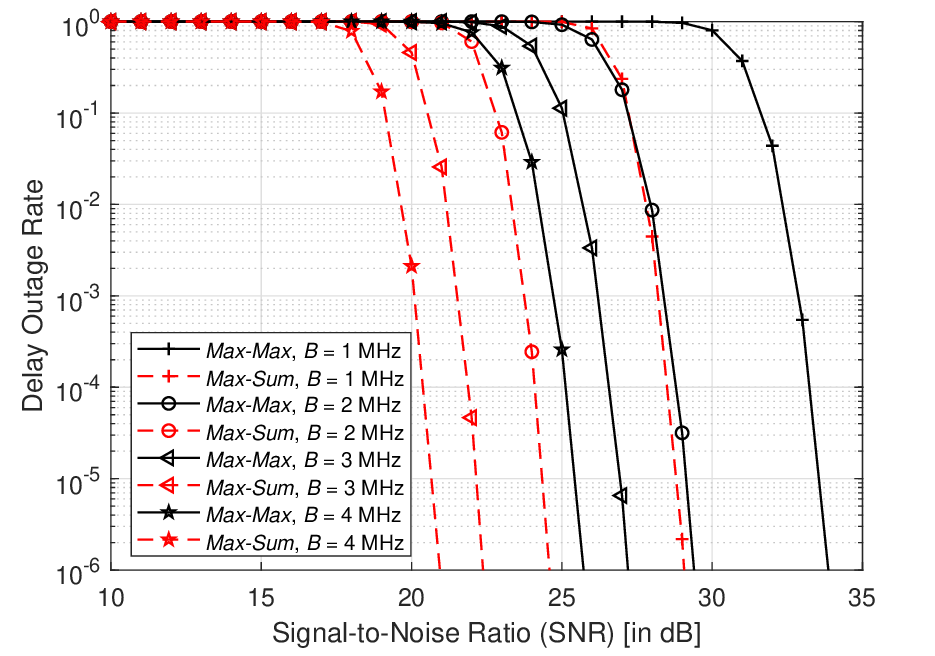}}\\
\caption{DOR versus SNR when $N$ = 4 RISs, $M$ = 16 elements, $K$ = 4 ports, $m$ = 1, and $A = 1\lambda^2$, (a) with varying the amount of data, $R$, for a fixed $B$ = 1 MHz, (b) with varying the bandwidth, $B$, for a fixed $R$ = 3000 bits.}
\label{Figure7}
\end{figure*}


\section{Conclusion} \label{Conclusion}
This paper proposed two novel selection schemes, referred to as \textit{Max-Max} and \textit{Max-Sum}, to enhance the system performance of the multi-RIS-assisted FAS. We derived the CDFs of the SNR for both selection schemes and used them to obtain approximate theoretical OP and DOR expressions. In order to have further insights into the system performance in terms of diversity and coding gains, we carried out the OP asymptotic analysis. We finally verified the analytical results using Monte Carlo simulations. The simulation results show that the OP performance improves as the number of RISs, the number of ports, FA size, and the number of reflecting RIS elements increase and as the channel conditions improve. In addition, the proposed \textit{Max-Sum} selection scheme is superior to the \textit{Max-Max} selection scheme. The results also demonstrate that the SNR gain achieved by the \textit{Max-Sum} selection scheme improves as the number of ports and the size  of the FA increase. Moreover, there is a degradation in the DOR performance once the amount of transmitted data increases, while there is an improvement when the channel bandwidth becomes larger.

\bibliographystyle{IEEEtran}

\bibliography{Ref}

\end{document}